\documentclass[acmsmall]{acmart}

\usepackage[capitalise,noabbrev]{cleveref}
\usepackage[T1]{fontenc}
\usepackage[utf8]{inputenc}
\usepackage{microtype}
\usepackage{booktabs,multirow,array,makecell}
\usepackage{enumitem}
\setlist{nosep,leftmargin=1.4em}
\usepackage{xcolor}
\usepackage{tikz}
\usetikzlibrary{arrows.meta,positioning,shapes.geometric,calc,fit}
\usepackage{algorithm}
\usepackage{algpseudocode}

\theoremstyle{plain}
\newtheorem{theorem}{Theorem}
\newtheorem{proposition}[theorem]{Proposition}

\theoremstyle{definition}
\newtheorem{definition}[theorem]{Definition}
\theoremstyle{remark}

\newcommand{\R}{\mathbb{R}}
\newcommand{\B}{\{0,1\}}
\newcommand{\nnz}{\mathrm{nnz}}
\renewcommand{\lg}{\log_2}
\newcommand{\Vv}{\mathbf{v}}
\newcommand{\xx}{\mathbf{x}}
\newcommand{\yy}{\mathbf{y}}

\newcommand{\ww}{\mathbf{w}}
\newcommand{\Oh}{\mathcal{O}}

\newcommand{\T}{^{\top}}
\newcommand{\no}[1]{}

\title[Computation-Friendly Compression of Matrices]{Computation-Friendly Compression of Matrices:\\
Speeding up Linear and Boolean Algebra}
\author{Gonzalo Navarro}
\affiliation{%
  \institution{Universidad de Chile}
  \department{Department of Computer Science}   % opcional
  \city{Santiago}
  \country{Chile}
}
\email{gnavarro@uchile.cl}

\begin{document}

\begin{abstract}
Large sparse matrices---adjacency matrices of web and social graphs, feature matrices of machine-learning
pipelines, the relations of graph databases---are routinely compressed to save space. Classical compressors,
however, force the matrix to be decompressed before it can be used, so the space savings often yield a
computational overhead. A recent line of work studies \emph{computation-friendly} compression: formats
that are not only small, but let us carry out matrix--vector products, and in some cases matrix--matrix
products, sums and closures, directly on the compressed representation {\em in time proportional to that compressed size}.
This survey organizes that literature around the kind of redundancy each format exploits: 
spatial clustering of the nonzeros ($k^2$-trees and their variants), similarity between
rows (Boldi--Vigna-style referencing and set-difference compression), repeated substrings (grammar
compression), and dense bipartite substructures (bicliques). For each family we describe the representation,
the algorithms, the theoretical guarantees and the empirical performance.
Further, we show that the last three families are instances of a single
general scheme, which we call a {\em sparse
straight-line factorization} of the matrix, whose forward and backward substitutions evaluate right and left products in time proportional to the size of the representation, and that the product of two such factorizations is again one, which gives matrix--matrix products with a compressed output. An appendix reviews formats that must be decompressed (at least
partially) before operating, for completeness. We close with a cross-family comparison and a list of open
problems.
\end{abstract}

\keywords{matrix compression, sparse matrices, computation-friendly compression,
matrix--vector multiplication, Boolean matrix multiplication, grammar compression,
$k^2$-trees, web graphs}

\begin{CCSXML}
<ccs2012>
   <concept>
       <concept_id>10003752.10003809.10010031.10002975</concept_id>
       <concept_desc>Theory of computation~Data compression</concept_desc>
       <concept_significance>500</concept_significance>
       </concept>
   <concept>
       <concept_id>10003752.10003809.10010031</concept_id>
       <concept_desc>Theory of computation~Data structures design and analysis</concept_desc>
       <concept_significance>500</concept_significance>
       </concept>
   <concept>
       <concept_id>10002950.10003714.10003715.10003719</concept_id>
       <concept_desc>Mathematics of computing~Computations on matrices</concept_desc>
       <concept_significance>500</concept_significance>
       </concept>
   <concept>
       <concept_id>10003752.10003809.10003635</concept_id>
       <concept_desc>Theory of computation~Graph algorithms analysis</concept_desc>
       <concept_significance>500</concept_significance>
       </concept>
   <concept>
       <concept_id>10010147.10010148.10010149.10010158</concept_id>
       <concept_desc>Computing methodologies~Linear algebra algorithms</concept_desc>
       <concept_significance>500</concept_significance>
       </concept>
 </ccs2012>
\end{CCSXML}

\ccsdesc[500]{Theory of computation~Data compression}
\ccsdesc[500]{Theory of computation~Data structures design and analysis}
\ccsdesc[500]{Mathematics of computing~Computations on matrices}
\ccsdesc[500]{Theory of computation~Graph algorithms analysis}
\ccsdesc[500]{Computing methodologies~Linear algebra algorithms}

\maketitle

\section{Introduction}
\label{sec:intro}

Matrices are the workhorse of scientific computing, data analysis and machine learning, and the matrices that
matter in practice are often huge and sparse. The adjacency matrix of a web crawl has billions of rows but
only a few dozen nonzeros per row~\cite{boldi2004webgraph,newman2010networks}; the labeled graph behind a
knowledge base such as Wikidata decomposes into thousands of Boolean matrices, one per predicate~\cite{arroyuelo2025evaluating};
the design matrices of machine-learning pipelines have millions of rows with few distinct values per
column~\cite{elgohary2018compressed,ferragina2022improving}. Such matrices are routinely compressed, either to
fit in main memory, to reduce the bandwidth between memory and processor, or to cut storage and
transmission costs.

The price is that compression and computation have traditionally been at odds. A general-purpose
compressor (\texttt{gzip}, \texttt{xz}, \texttt{zstd}, \dots) or an entropy-coded format saves space on disk, but
a matrix--vector product on it requires decompressing the matrix, at least piece by piece (say, row by row), so the
decompression time is \emph{added} to the time of the standard operation instead of reducing it; and if the matrix is expanded as a whole, the peak memory is also that of the uncompressed matrix. The notion of \emph{computation-friendly} compression is more ambitious:
a representation that is as small as possible \emph{and} on which the operation of interest can
be run directly, with time (and working space) {\em proportional to the compressed} size rather than to the size
of the matrix it stands for. If this can be done, compression becomes an algorithmic accelerator, not just a
storage device.

Is this feasible? For the sparsest possible redundancy, yes, and this is what every sparse-matrix format does:
the product with a vector can be computed in time proportional to the number of nonzeros $\nnz(A)$, instead of
the total $n\cdot m$ cells. The question is whether \emph{more powerful} forms of compression, which go beyond sparsity and
exploit {\em the structure} of the nonzeros, can also be made to speed up computation. There are good reasons for
pessimism. \citet{abboud2020impossibility} show that, under standard complexity assumptions, some of the most
effective compression schemes (grammar compressors, here including the Lempel--Ziv family \cite{ziv1977universal}) cannot support even
inner products in time proportional to their compressed size in the worst case, and
\citet{chakraborty2018tight} prove cell-probe lower bounds for Boolean matrix--vector
multiplication on any data structure that uses little space. The positive results surveyed here therefore
are not worst-case theorems about arbitrary matrices: they are guarantees stated in terms of the size of
a particular compressed representation, which is small only on applications where matrices have the structure the
representation exploits.

\paragraph{Scope.}
This survey covers \emph{lossless} matrix representations that (i)~can be significantly smaller than the plain
sparse representation on matrices with suitable structure, and (ii)~support matrix--vector multiplication,
or other algebraic operations such as matrix--matrix sums, products or transitive closure, on the compressed
form. We are primarily interested in matrix--vector and matrix--matrix products, with sums and other
pointwise operations as secondary operations (because they are easier and/or less frequently used). We organize the 
literature into four families according to the type of redundancy they exploit:
\begin{enumerate}
\item \textbf{$k^2$-tree based representations} (\cref{sec:quadtree}), which exploit the spatial
  clustering of the nonzeros and, by recursively partitioning the matrix, support a complete Boolean algebra
  \cite{brisaboa2014compact,arroyuelo2025evaluating,farina2025cache,arroyuelo2026boosting};
\item \textbf{Referencing / set-difference formats} (\cref{sec:reference}), where a row is described in terms
  of another, similar row (the idea behind the WebGraph framework of \citet{boldi2004webgraph}
  and, more generally, behind the compression of ``sets of sets'' by symmetric
  differences~\cite{bookstein1991compression,gagie2026set});
\item \textbf{Grammar-compressed matrices} (\cref{sec:grammar}), where repeated sequences of
  (value, column) pairs across rows are replaced by nonterminals of a context-free
  grammar~\cite{ferragina2022improving};
\item \textbf{Biclique decompositions} (\cref{sec:biclique}), where complete bipartite subgraphs of the
  adjacency matrix are extracted and shared~\cite{hernandez2014compressed,francisco2022graph}, a representation that also supports Boolean matrix--matrix multiplication directly on the compressed matrices~\cite{araya2026compressed}.
\end{enumerate}

We present first the $k^2$-trees, which offer good space on clustered matrices and the most general algebra, but 
also the least gains in time performance. We then present the other three families,
which achieve considerable speedups and still yield relevant compression
ratios. Experimental numbers reported in \cref{sec:quadtree,sec:reference,sec:grammar,sec:biclique} come from the original papers, obtained on different machines, datasets and implementations; they are not comparable across families, and 
we use them only to illustrate trends.
For completeness, Appendix~\ref{sec:decompress} briefly reviews the formats that must be decoded, at least in part,
before they are operated upon. 

\paragraph{A unifying view.}
Beyond describing each format, we stress in \cref{sec:factorization} one observation: referencing, grammar and biclique formats are
all instances of a single scheme, a \emph{sparse straight-line factorization} of the matrix,
$A=P\,(I-Q)^{-1}\,T+T'$ with four sparse matrices. A right product is then a forward substitution and a left
product the transposed backward substitution, each costing time proportional to the number of nonzeros in the factors. This
single fact explains why all these formats multiply in time proportional to their size, why left and right products cost the same,
why the saving in arithmetic operations equals the compression ratio relative to the plain sparse format (and the observed speedups stay below it), and what really distinguishes the families: the shape
and depth of the sharing structure $Q$ and, above all, the (heuristic, expensive) way in which the factorization is found. We use this view throughout the survey,
specializing it in each of \cref{sec:reference,sec:grammar,sec:biclique}, and we revisit it when comparing the families (\cref{sec:comparison}) and in the open problems
(\cref{sec:open}). The same view shows that the product of two factorized
matrices is again a factorization, computed from the compressed pieces
(\cref{sec:factprod}), which generalizes the compositional output of the
biclique algorithms; the size of that output cannot be bounded in general, so
recompression may be needed. The view is ours and is not stated in the
original papers; we also mark other contributions of our own throughout.

\section{Background and Problem Statement}
\label{sec:background}

\subsection{Notation and operations}

Let $A$ be an $n\times m$ matrix over a semiring $(\mathbb{S},+,\cdot)$. We mostly consider two
cases: real matrices on the semiring $(\mathbb{R},+,\cdot)$, and Boolean matrices on
$(\B,\vee,\wedge)$. Boolean matrices model the adjacency matrices of directed graphs
($A_{uv}=1$ iff $(u,v)$ is an edge) and, more generally, of binary relations. We write $\nnz(A)$ for the number of
nonzero entries, $A_{i\ast}$ for the $i$-th row, and $\Vv_i\in\mathbb{S}^m$ for the row viewed as a vector. When $A$ is Boolean it is
often convenient to regard each row as the \emph{set} of columns in which it has a $1$, so that $A$ becomes a
\emph{set of sets} (e.g., the adjacency lists of the graph).

The operations that we study, in decreasing order of interest, are:
\begin{description}[style=nextline,font=\normalfont\bfseries]
\item[Right product:] $\yy = A\,\xx$, with $\yy\in\mathbb{S}^n$ and $\xx\in\mathbb{S}^m$ (usually dense).
\item[Left product:] $\xx\T = \yy\T A$, the product by a row vector, i.e., the product by $A\T$.
\item[Matrix--matrix product:] $C=A\times B$, with $C_{ij}=\sum A_{ik}\cdot B_{kj}$.
\item[Pointwise operations:] sum $A+B$, and (for Boolean matrices) intersection, difference, symmetric difference.
\item[Transposition and closure:] $A\T$ and the transitive closure $A^{+}=A+A^2+\cdots+A^n$ of Boolean matrices.
\end{description}

In this survey we focus mostly on the \emph{right and left products}. They are the basic step of the power
method, of conjugate gradient and other iterative solvers, of the training and inference of linear
models, and of the propagation steps of graph neural networks (GNNs). A typical running example is
PageRank~\cite{page1999pagerank,langville2006google}: given the row-normalized adjacency matrix $M=D^{-1}A$ of a
graph, a teleport probability $0<\alpha<1$ and a start distribution $\mathbf{p}_0$, one iterates
\begin{equation}\label{eq:pagerank}
\mathbf{p}_t=\alpha\,\mathbf{p}_0+(1-\alpha)\,\mathbf{p}_{t-1}\,M ,
\end{equation}
so that the cost of the computation is dominated by repeated products of one fixed matrix by changing
vectors~\cite{francisco2022graph,chung2007heat}. Because the same matrix is multiplied many times, it pays off to
spend a (possibly large) one-time effort to compress it.

Matrix--matrix products and transitive closure are the core of Boolean matrix algebra, and arise in context-free
and regular path queries over labeled graphs~\cite{arroyuelo2025evaluating,losemann2013complexity,angles2017foundations}, in
triangle counting, and in join-project queries in databases~\cite{amossen2009faster}.

\subsection{Plain sparse formats}
\label{sec:plain}

All the formats below are measured against the plain \emph{sparse} formats~\cite{saad2003iterative}.
\emph{Coordinate} (COO) stores the triples $(i,j,A_{ij})$ of the nonzeros. \emph{Compressed sparse row} (CSR)
sorts the nonzeros by row and stores three arrays: the nonzero values, their column indices, and, for each row,
the position where its nonzeros start; \emph{CSC} does the same by column. For real matrices with few
distinct values, \emph{CSR-IV}~\cite{kourtis2008optimizing} and its variant CSRV~\cite{ferragina2022improving} replace the values by indices into a table of distinct
values. With $w$-bit words, CSR takes about
$\nnz(A)\,(\lg m+w)+n\lg\nnz(A)$ bits, and both products with a vector take $\Theta(\nnz(A))$ time. Boolean
CSR drops the values, and takes essentially $\nnz(A)\lg m$ bits. Those figures are the baseline: the
\emph{compressed size} $|\mathcal{C}(A)|$ of a representation $\mathcal{C}$ will be compared with
$\nnz(A)$, and the \emph{speedup potential} of a computation-friendly format will be the ratio
$\nnz(A)/|\mathcal{C}(A)|$, measured in the appropriate unit.

\paragraph{Product of sparse matrices (Schoor's algorithm).}
For the product $C=A\times B$ of two sparse Boolean matrices, the classical algorithm of \citet{schoor1982fast} works best if $A$ is sorted by columns (CSC) and $B$ by rows (CSR). It intersects the set of nonempty columns of $A$ with the set of nonempty rows of $B$. For each $t$ in the intersection, every $A_{it}=1$ and every $B_{tj}=1$ contribute to $C_{ij}=1$: the index $t$ generates the Cartesian product of the rows $i$ of column $t$ of $A$ and the columns $j$ of row $t$ of $B$, which is a \emph{biclique} of $C$. Different values of $t$ may generate the same cell, and the duplicates must be removed (for example, with a heap if the aim is to produce the output in CSR or CSC order). The cost is dominated by the number $\sum_t c_tr_t$ of elementary products, where $c_t$ and $r_t$ are the numbers of nonzeros of column $t$ of $A$ and of row $t$ of $B$. This is $\Oh(\nnz(A)\nnz(B)/n)$ on average if the 1s are uniformly distributed \cite{schoor1982fast}.

\subsection{What is computation-friendly compression?}

\begin{definition}[Computation-friendly compression]\label{def:cf}
A representation $\mathcal{C}(A)$ of a matrix $A$ is \emph{computa\-tion-friendly} for an operation $\textsc{op}$ if
$\textsc{op}$ can be computed from $\mathcal{C}(A)$, in time and working space $\Oh(|\mathcal{C}(A)|)$
plus the size of the output. 
\end{definition}

It follows from Definition~\ref{def:cf} that \textsc{op} cannot be computed by first materializing
$A$ (unless $\mathcal{C}$ never obtains asymptotic compression). The definition is deliberately loose about the unit in which $|\mathcal{C}(A)|$ is measured. For $k^2$-tree compression, the space is measured in \emph{bits} and 
the times are proportional to their amount. For the other three representation families considered in the survey 
the size is measured in \emph{symbols} (integers, signatures), and the time is linear in the number of those 
symbols. Those symbols can be further compressed with variable-length codes; the cost of decoding those is a
constant per symbol in the cases we discuss, but it matters in practice (\cref{sec:grammar}).

An important aspect of these compression formats is what operations
\textsc{op} they can support efficiently, which can range from just
matrix--vector multiplications to a complete matrix algebra. To enable 
a matrix algebra, compositionality is desirable: can the output be directly
expressed in the same compressed format, or does one need to recompress it?
We also consider the basic problem of efficiently accessing random cells
$A_{ij}$.

We note that the term \emph{compressed matrix multiplication} is used by \citet{pagh2013compressed} for an unrelated notion: approximating a product by a randomized sketch of the output. It is lossy and approximate, and not covered here.

\subsection{Lower bounds and what they imply}
\label{sec:limits}

Two families of lower bounds apply to our problem. First, \citet{abboud2020impossibility}
show that, for vectors compressed by applying a grammar compressor (a class where they include the Lempel--Ziv family)
to the plain sequence of their values, even the inner product cannot be computed
in time linear in the compressed size, in the worst case and under standard fine-grained complexity
assumptions; in contrast, for simpler schemes such as run-length encoding it can. Second,
\citet{chakraborty2018tight} prove that any data structure that stores a Boolean $n\times n$ matrix
in $r$ bits, for $n<r<n^2$, requires time $t\in\tilde\Omega(n^3/r)$ for matrix--vector
multiplication, a trade-off that no compression can circumvent in the worst case. Related hardness arises
in the \emph{online} matrix--vector (OMV) setting~\cite{henzinger2015unifying,larsen2017faster}. Finally,
the fastest algorithms for general matrix multiplication~\cite{strassen1969gaussian,coppersmith1990matrix,alman2018further}
are asymptotic and need rings, whereas the Boolean case is only a semiring; the practical alternatives are
combinatorial algorithms~\cite{arlazarov1970economical}, algorithms for sparse
inputs~\cite{yuster2005fast,amossen2009faster,schoor1982fast,gustavson1978two}, and the output-sensitive
ones.

The surveyed papers escape these results for two complementary reasons. The first is that their guarantees are
\emph{instance-dependent}: the bounds are in terms of the size of a particular representation (number of
nonzeros of a residual matrix, size of a grammar, number of nodes of a tree), which is small only on
compressible inputs. For example, a sparse matrix $A$ represented in COO, CSR,
or a similar format in $\Oh(\nnz(A))$ space is easily multiplied by a dense vector in $\Oh(n+\nnz(A))$
time. The second, which we believe is crucial for grammars, is that the representations
store \emph{positions explicitly}: in the sequence of (value, column) pairs of \cref{sec:grammar} a repeated
substring always refers to the same columns, so that both align with the same part of the vector $\xx$. This is
what the schemes covered by the impossibility results lack. We come back to this in \cref{sec:grammar}.

\subsection{Computational model}

Unless stated otherwise we use the word RAM model with $\Theta(\log N)$-bit words, as is customary in
compact data structures~\cite{navarro2016compact}. Several of the structures below rely on succinct trees and bitvectors with
$\mathrm{rank}$ and $\mathrm{select}$ support~\cite{jacobson1989space,navarro2016compact}; we recall only what we
need, where we need it.

\begin{table}[t]
\centering
\small
\caption{Summary of the four families of computation-friendly formats and the operations supported on compressed form.
``Mat-vec'' denotes products by a dense vector.
Complexities are those proved in the cited papers (see the corresponding sections for notation);
all are $\Oh(|\mathcal{C}|+n)$ with the corresponding notion of $|\mathcal{C}|$.}
\label{tab:taxonomy}
{\setlength{\tabcolsep}{4pt}%
\begin{tabular}{@{}p{1.9cm}p{2.5cm}p{1.9cm}p{3.0cm}p{3.0cm}@{}}
\toprule
Family & Redundancy exploited & Matrices & Operations & Time of mat-vec\\
\midrule
$k^2$-tree variants (\cref{sec:quadtree}) & Clustering of the 1s in the plane & Binary & Full Boolean algebra: $+,\cap,-,\oplus,\times,{}^{+}$, $\T$ & $\Oh(\nnz(A)\log n+n)$ \\[3pt]
Referencing  (\cref{sec:reference}) & Rows similar to other rows & Binary; extendable to real & Mat-vec (right; left by duality); access, etc. & $\Oh(\nnz(A')+n)$ \\[3pt] %, where $A'$ is the residual \\[3pt]
Grammars (\cref{sec:grammar}) & Repeated (value, column) subsequences & Real, few distinct values or sparse & Mat-vec (right and left) & $\Oh(|C|+|R|)$ \\[3pt]
Bicliques$\,\,\,$ (\cref{sec:biclique}) & Complete bipartite subgraphs & Binary & Mat-vec (right and left); Compositional mat-mat & $\Oh(\sum_r(|S_r|+|T_r|)+\nnz(R_A)+n)$ \\ %, where $R_A$ is the residual \\
\bottomrule
\end{tabular}}
\end{table}

\subsection{A taxonomy of computation-friendly formats}
\label{sec:taxonomy}

We classify the formats according to the kind of redundancy they extract from the matrix. 
\cref{tab:taxonomy} summarizes the families covered in \cref{sec:quadtree,sec:reference,sec:grammar,sec:biclique}. After presenting them, we show in \cref{sec:factorization} that the last three are instances of a single scheme.

\section{$k^2$-Trees and their Algebra}
\label{sec:quadtree}

The $k^2$-tree family exploits the \emph{spatial clustering} of the nonzeros in the
matrix by means of a recursive partitioning into quadrants. It
supports a rich Boolean algebra of matrices (sums, intersections, products, closure) whose results are again
$k^2$-trees. The work surveyed in this section was motivated by graph databases, where the algebra is used to solve
regular path queries (RPQs), but it is of independent interest as a compressed-matrix toolbox.

\subsection{The $k^2$-tree}

The $k^2$-tree of \citet{brisaboa2014compact,brisaboa2009k2} represents a Boolean matrix
$A$ of side $n$ (padded to a power of $k$) as a cardinal tree: the root stands for the whole matrix, which is divided into
$k^2$ equal quadrants, each of which is either empty (a leaf) or is recursively represented. It is a compact
representation of the MX-quadtree~\cite{finkel1974quad,samet1984quadtree}. We use $k=2$ throughout, so each node has four possible
children, in the order top-left, top-right, bottom-left, bottom-right. The tree is
stored level by level, writing for each internal node a $4$-bit \emph{signature} that marks which quadrants are nonempty
(\cref{fig:k2tree}); the resulting bitvector $L$ is the cardinal version of the LOUDS representation of
trees~\cite{jacobson1989space}, and navigation uses $\mathrm{rank}(L,i)$, the number of 1s in $L[1..i]$: the $t$-th nonempty child of the node whose
signature is the $v$-th one is the signature number $\mathrm{rank}(L,4(v-1))+t+1$. A constant-time $\mathrm{rank}$ needs $o(|L|)$ additional
bits~\cite{navarro2016compact}. If $a=\nnz(A)$, in the worst case $4a\log_4(n^2/a)(1+o(1))$ bits are used, and
the space is significantly smaller when the 1s are clustered~\cite{debernardo2023faster}: if the points can be
packed into $c$ square clusters of sides $n_1,\ldots,n_c$, holding $a_1,\ldots,a_c$ points each, then the 
$k^2$-tree uses $\Oh(c\log n + \sum_i a_i\log n_i)$ bits.
Variants of the $k^2$-tree are well suited to represent
Web graphs~\cite{brisaboa2014compact}, RDF stores~\cite{alvarez2015compressed}, raster images~\cite{debernardo2013compact}, and 
$d$-dimensional relations~\cite{arroyuelo2022optimal}, among several others.

\begin{figure}[t]
\centering
\begin{tikzpicture}[x=0.5cm,y=0.5cm,>=Latex,
  sig/.style={draw,rounded corners=1pt,inner sep=1.5pt,font=\ttfamily\footnotesize,fill=white},
  em/.style={font=\scriptsize,gray}]
% ---- matrix (8x8): rows 0..7 from the top, columns 0..7
\begin{scope}[x=0.36cm,y=0.36cm]
\fill[gray!15] (0,0) rectangle (4,4); % empty bottom-left quadrant
\draw[step=1,gray!40,very thin] (0,0) grid (8,8);
\draw[step=2,gray!80,semithick] (0,0) grid (8,8);
\draw[very thick] (0,0) rectangle (8,8);
\draw[very thick] (4,0)--(4,8); \draw[very thick] (0,4)--(8,4);
\foreach \r/\c in {0/1,1/0,1/1,2/2,3/4,5/6,6/5,7/7}{\fill (\c+0.15,7-\r+0.15) rectangle (\c+0.85,7-\r+0.85);}
\foreach \i in {0,...,7}{\node[font=\tiny,gray!50!black] at (\i+0.5,8.5) {\i};\node[font=\tiny,gray!50!black] at (-0.5,7.5-\i) {\i};}
\end{scope}
% ---- tree
\node[sig] (r) at (16.97,8.0) {1101};
\node[sig] (a) at (10.35,5.0) {1001};
\node[sig] (b) at (16.70,5.0) {0010};
\node[sig] (c) at (23.60,5.0) {0111};
\node[em] (e0) at (19.60,5.0) {$\emptyset$};
\node[sig] (a0) at (8.15,2.0) {0111};
\node[em] (ea1) at (9.80,2.0) {$\emptyset$};
\node[em] (ea2) at (10.90,2.0) {$\emptyset$};
\node[sig] (a3) at (12.55,2.0) {1000};
\node[em] (eb0) at (14.50,2.0) {$\emptyset$};
\node[em] (eb1) at (15.60,2.0) {$\emptyset$};
\node[sig] (b2) at (17.25,2.0) {0010};
\node[em] (eb3) at (18.90,2.0) {$\emptyset$};
\node[em] (ec0) at (20.30,2.0) {$\emptyset$};
\node[sig] (c1) at (21.95,2.0) {0010};
\node[sig] (c2) at (24.15,2.0) {0100};
\node[sig] (c3) at (26.35,2.0) {0001};
\draw[->] (r) -- (a);\draw[->] (r) -- (b);\draw[->] (r) -- (c);\draw[->,gray,dashed] (r) -- (e0);
\draw[->] (a) -- (a0);
\draw[->,gray,dashed] (a) -- (ea1);
\draw[->,gray,dashed] (a) -- (ea2);
\draw[->] (a) -- (a3);
\draw[->,gray,dashed] (b) -- (eb0);
\draw[->,gray,dashed] (b) -- (eb1);
\draw[->] (b) -- (b2);
\draw[->,gray,dashed] (b) -- (eb3);
\draw[->,gray,dashed] (c) -- (ec0);
\draw[->] (c) -- (c1);
\draw[->] (c) -- (c2);
\draw[->] (c) -- (c3);
\node[font=\tiny,anchor=north,align=center] at (8.15,1.15) {$(0,1)$\\$(1,0)$\\$(1,1)$};
\node[font=\tiny,anchor=north] at (12.55,1.15) {$(2,2)$};
\node[font=\tiny,anchor=north] at (17.25,1.15) {$(3,4)$};
\node[font=\tiny,anchor=north] at (21.95,1.15) {$(5,6)$};
\node[font=\tiny,anchor=north] at (24.15,1.15) {$(6,5)$};
\node[font=\tiny,anchor=north] at (26.35,1.15) {$(7,7)$};
% bitvector
\node[font=\ttfamily\footnotesize,anchor=north west] at (0,-1.3) {$L$ = 1101 \; 1001\,0010\,0111 \; 0111\,1000\,0010\,0010\,0100\,0001};
\end{tikzpicture}
\caption{An $8\times8$ Boolean matrix and its $k^2$-tree for $k=2$, with three levels. The root signature $1101$ says that the top-left, top-right and bottom-right quadrants contain 1s and the bottom-left one (shaded) does not. Each nonempty quadrant has, at the next level, a signature of its own quadrants, and so on until the $2\times2$ submatrices, whose signatures mark the cells that are $1$ (listed below each of them). Empty quadrants (dashed) are not stored, nor are the cells. The bitvector $L$ is the concatenation of the signatures, level by level.}
\label{fig:k2tree}
\end{figure}

\paragraph{Supporting access.} The $k^2$-tree navigation allows computing any
$A_{ij}$ by descending along $\Oh(\log n)$ levels from the root, in $\Oh(1)$ time per level, until finding an 
empty submatrix or the explicit cell value.
At each signature, algebraic calculation with $(i,j)$ determines the child to follow, and a constant-time
rank operation computes the position of the signature of that child.

The format can be extended to associate semiring values (e.g., reals) to the nonempty cells. Those values can
be stored in an array $V[1..\nnz(A)]$ in the left-to-right order of the leaves. Once accessing a cell at some
leaf position $L[i]$, its corresponding real value is at $V[\textrm{rank}(L,i)-\textrm{rank}(L,z)]$, where $z$
is the last position in $L$ corresponding to signatures not in the last level. Several of the algorithms in this 
section can be extended to work on semiring values by using this technique, with some caveats: for example, in
the case of a ring like $(\mathbb{Z},+,\cdot)$, a zero cell can result from adding nonzero cells, which cannot 
occur, say, the Boolean semiring.

\subsection{Products by a vector}
\label{sec:k2vec}

Multiplying $A$ by a dense vector $\xx$ is simple and natural with the $k^2$-tree representation. Though it does not appear in the papers on the $k^2$-tree algebra, the implementation distributed with \citet{arroyuelo2025evaluating} (\url{https://github.com/adriangbrandon/rpq-matrix}) includes it; see \cref{alg:k2vec}.
The tree is traversed from the root. A single operation on a node returns its signature (the bitmap of its four quadrants) and the nodes of its nonempty children are obtained with a child operation (built on rank, as explained); the recursion continues only on those. The recursion keeps track of the corresponding portions of $\xx$ and $\yy$ to operate; note that $\ell$ is the side of the child quadrants. In the last level, the node is a cell with a $1$, and it adds one entry of $\xx$ to one entry of $\yy$. The matrix is padded to a power of two, but the vectors are not: the padding has no $1$s, so it is never reached. The code obtains the left product, $\yy\T A$, by computing $A\T\yy$. The transpose of $A$ is easily simulated by (virtually) exchanging the 2nd and 3rd bit of every node signature as it is traversed.

\begin{algorithm}[t]
\caption{Right product $\yy=A\xx$ on a $k^2$-tree with $k=2$, assuming $n$ has been rounded to the next power of $2$. Invoke with $\yy \gets 0$; \textsc{Mult}($\mathrm{root},n/2,1,1$); assume $\ell/2=0$ when $\ell=1$.}
\label{alg:k2vec}
\begin{algorithmic}[1]
\Procedure{Mult}{$v,\ell,j,i$}
  \If{$\ell=0$} \State $y_i\gets y_i+x_j$ \Comment{$v$ is a cell with a $1$}
  \Else
    \For{each quadrant $(a,b)\in\{0,1\}^2$ with bit $1$ in the signature of $v$}
      \State \Call{Mult}{$\mathrm{child}(v,a,b),\ \ell/2,\ j+b \ell,\ i+a \ell$}
    \EndFor
  \EndIf
\EndProcedure
%\State $\yy\gets 0$; \Call{Mult}{$\mathrm{root},L,\xx,\yy$}
\end{algorithmic}
\end{algorithm}

Every visited node across the computation has a $1$ in the bitmap of its parent, and a matrix with $a=\nnz(A)$ has $\Oh(a\log(n^2/a))$ such nodes, so the product time, including that of initializing $\yy$, is $\Oh(a\log(n^2/a)+n)$, with $\Oh(\log n)$ words of working space for the recursion. The traversal accesses the bitvector $L$ left-to-right, which is cache-friendly. Still, this time complexity is worse than the $\Oh(a+n)$ that is easily be obtained with a plain COO, CSR or CSC sparse representation; compared to those formats, the $k^2$-tree saves space, but not computation time. The formats of the next sections are able to speed this up.

\paragraph{Breaking the $\Theta(a)$ barrier (our derivation).}
A $k^2$-tree variant designed for matrices with large blocks of 1s \cite{debernardo2013compact} marks with $\Oh(1)$ bits the quadrants that are full of 1s. To multiply in time proportional to the number of nodes of that $k^2$-tree (which can be much less than $\nnz(A)$) one should do $y[i] \gets y[i] + \sum_{j=j_1}^{j_2} x[j]$ for all $i \in [i_1..i_2]$ when
arriving at a full quadrant $A[i_1..i_2,j_1..j_2]$. The sum is easily computed in constant time by precomputing 
prefix sums on $\xx$: $s[0]=0$ and $s[j]=s[j-1]+x[j]$ for all $j \ge 1$, thus $\sum_{j=j_1}^{j_2} x[j] = s[j_2]-s[j_1-1]$. Writing all the $y$ values can be done efficiently by creating a perfect binary tree on top of $\yy$,
where nodes at depth $d$ cover consecutive ranges of $n/2^d$ cells of $\yy$. The nodes have counters initialized
at zero, and a full quadrant $A[i_1..i_2,j_1..j_2]$ increments the counter of the node that covers 
$y[i_1..i_2]$ by $s[j_2]-s[j_1-1]$. At the end, we traverse that binary tree from the root to each cell of $\yy$, accumulating the counters in the path from the root and writing the sum to each cell arrived at.

\subsection{A Boolean matrix algebra on $k^2$-trees}
\label{sec:algebra}

\citet{arroyuelo2023evaluating,arroyuelo2025evaluating} design and implement an algebra of Boolean
matrices on $k^2$-trees. The operations $C = A~\textsc{op}~B$ are applied to $n\times n$ matrices, and $C$ is 
returned in the same $k^2$-tree format, so the algebra is \emph{compositional}. We call $a=\nnz(A)$, $b=\nnz(B)$, and $c=\nnz(C)$.

\paragraph{Transposition.}
Since $A\T$ is obtained by exchanging the roles of the second and third quadrants at every node, the $k^2$-tree is
not modified: a \emph{transposed} flag is associated with the matrix, toggled in $\Oh(1)$ time, and honored by every operation in any submatrix of $A$, as illustrated with left products.

\paragraph{Sum.}
If neither matrix is transposed, $A+B$ is computed by a sequential pass over both bitvectors, level by level, with a
queue of tasks of two kinds: a \emph{merge} task outputs the bitwise OR of the next two signatures and enqueues, for
each child, a merge task (both have it), or a \emph{copy} task for the matrix that has it, or nothing; a \emph{copy} task copies one signature
and enqueues the copy tasks of its children. No $\mathrm{rank}$ operations are necessary; the time is proportional to the sizes of the two trees,
$\Oh(a\log(n^2/a)+b\log(n^2/b))\subseteq\Oh((a+b)\log n) = \Oh(c\log n)$, and consecutive copy tasks are merged so that they are
executed word-wise, not bit-wise.

\paragraph{Other pointwise operations.}
Intersection ($\cap$, Boolean AND), difference ($-$) and symmetric difference ($\oplus$) lack the property that the root signature
of the result is a function of the signatures of the operands. They are then computed recursively over the quadrants,
\begin{equation*}%\label{eq:setlike}
A\circ B=\begin{pmatrix}A_0\circ B_0 & A_1\circ B_1\\ A_2\circ B_2 & A_3\circ B_3\end{pmatrix},
\end{equation*}
writing each result directly into the level-wise output buffers, and copying whole subtrees word by word when one
operand is empty. The authors prove adaptive bounds in terms of the \emph{alternation} $\delta$ between the two sets of 
positions of the 1s~\cite{barbay2008alternation}: for example, the intersection takes $\Oh(\delta\log(n^2/\delta))$ time, with $c\le\delta\le\min(a,b)$.

\paragraph{Product.}
The multiplication follows the classic divide-and-conquer recursion on quadrants,
\begin{equation*}%\label{eq:mult}
A\times B=\begin{pmatrix}A_0B_0+A_1B_2 & A_0B_1+A_1B_3\\ A_2B_0+A_3B_2 & A_2B_1+A_3B_3\end{pmatrix}.
\end{equation*}
Empty quadrants (a $0$ in the signature) prune the recursion, and the sum of a product with an empty matrix is just a reference
to the product. The cost is shown to be $\Oh\bigl(\min(a,b)\sqrt{\max(a,b)}\,\log n\bigr)$ on average, for
uniformly distributed 1s (for $a=b$, this is $\Oh(a^{3/2} \log n)$).

\paragraph{Transitive closure.}
The standard approach, $A\gets A+A\times A$ until the matrix stops changing, needs up to $\lg n$ multiplications.
The authors give an algorithm inspired by Warshall's~\cite{warshall1962theorem} and by the ZCQ representation of
Penn~\cite{penn2006efficient} that uses only $\Oh(1)$ multiplications' worth of work. Splitting $A$ into quadrants,
a first stage computes $A'$, the matrix of paths whose intermediate nodes lie in the first half of the indices: one
starts computing $A_0'=A_0^+$ recursively, and then
\begin{equation*}%\label{eq:closure1}
A'=\begin{pmatrix}A_0^{+} & A_1+A'_0A_1\\ A_2+A_2A'_0 & A_3+A_2A'_1\end{pmatrix} %,\quad A'_0=A_0^{+},
\end{equation*}
(computing $A_3'$ after the other quadrants). The second stage allows the paths to cross the second half: one
first computes $A''_3=(A'_3)^{+}$ recursively, and then
\begin{equation*}%\label{eq:closure2}
A^{+}=A''=\begin{pmatrix}A'_0+A'_1A''_2 & A'_1+A'_1A''_3\\ A'_2+A''_3A'_2 & (A'_3)^{+}\end{pmatrix} %, \quad A''_3=(A'_3)^{+},\; A''_2=A'_2+A''_3A'_2,
\end{equation*}
(computing $A_0''$ after the other quadrants).
The number of recursive calls equals that of a multiplication, which makes the average time
$\Oh((a^{+})^{3/2}\log n)$ with $a^{+} = \nnz(A^{+})$, a $\log n$ factor better than iterated squaring.

\paragraph{Parallelism.}
Multi-threaded versions of the recursive algorithms give near-perfect speedup in the PRAM model, though in
practice thread-creation overhead outweighs it on very sparse matrices.

\paragraph{An uncompressed baseline.}
\label{sec:baseline-ops}
The same paper implements a baseline for sparse Boolean matrices (CSR plus CSC views) with
Schoor's algorithm~\cite{schoor1982fast} (\cref{sec:plain}) for the product, taking $\Oh((ab/n)\log n)$ average time, and a closure based on strongly connected
components~\cite{purdom1970transitive,tarjan1972depth}. This baseline is used again in \cref{sec:biclique}. \Cref{tab:alg-complexity} compares the average times of both formats.

\begin{table}[t]
\centering\small
\caption{Simplified average-case times of the Boolean algebra on $n\times n$ matrices with $s$ uniformly distributed 1s; $s^{+}$ is the size of the closure % and $p$ the number of processors
(from Table~1 of~\cite{arroyuelo2025evaluating}).}
\label{tab:alg-complexity}
\begin{tabular}{@{}lcc@{}}
\toprule
Operation & Baseline (CSR/CSC) & $k^2$-tree \\%& $k^2$-tree, parallel\\
\midrule
Transposition & $1$ & $1$ \\%& $1$\\
Sum and relatives & $s$ & $s\log n$ \\%& $\frac1p m\log n+m+\log n\log p$\\
Multiplication & $\frac{s^2}{n}\log n$ & $s^{3/2}\log n$ \\%& $\frac1p m^{3/2}\log n+\frac{m^2}{n}+\log n\log p$\\
Transitive closure & $s^{+}\log n$ & $(s^{+})^{3/2}\log n$ \\%& $\frac1p (m^{+})^{3/2}\log n+\frac{(m^{+})^2}{n}+\log^2 n\log p$\\
\bottomrule
\end{tabular}
\end{table}

\subsection{Reported results for the algebra}
\label{sec:algebra-results}

\citet{arroyuelo2025evaluating} show that $k^2$-trees use $4.33$ bytes per cell on very sparse matrices from Wikidata graph databases. They also test various algebra operations on $80$ synthetic $1\,000\times1\,000$ matrices with densities from $10^{-4}$ to $10^{-1}$, against the uncompressed baseline and against GNU Octave (an optimized software for sparse matrices, as a sanity check). \Cref{tab:alg-exp} summarizes the space and time ratios they report against the baseline, which in turn is only slightly slower than Octave.
\begin{table}[t]
\centering\small
\caption{Time ratios reported by \citet{arroyuelo2025evaluating} on synthetic Boolean matrices.}
\label{tab:alg-exp}
\begin{tabular}{@{}p{2.6cm}p{10.0cm}@{}}
\toprule
Operation & Compared to $k^2$-trees, the baseline is...\\
\midrule
Space & $4\times$ larger \\
Sum   & $3$--$5\times$ faster \\
Intersection & $4$--$9\times$ faster \\
Multiplication & $14$--$53\times$ faster (the parallel $k^2$-tree reaches  $10\times$ speedup on dense products) \\
Transitive closure & $4$--$189\times$ faster (except at density $10^{-4}$, where the $k^2$-tree beats the baseline; parallelism gives about $3\times$ speedup)\\
\bottomrule
\end{tabular}
\end{table}
The picture is that of a space--time trade-off: the $k^2$-tree is four times smaller and offers a complete compositional algebra, but a plain sparse representation with Schoor's algorithm is faster, as can be expected from the complexities of \cref{tab:alg-complexity}. This is in line with the results shown in \cref{fig:bic-mult}.

\subsection{Kache-trees: cache-friendly $k^2$-trees}
\label{sec:kache}

The level-wise layout of the $k^2$-tree makes any subtree, and in particular a subtree that must be copied to the output, scattered across levels. \citet{farina2025cache} propose
a depth-first deployment, the \emph{kache-tree}, so that every subtree is contiguous. The classical DFUDS representation of a cardinal tree~\cite{benoit2005representing,navarro2016compact}
uses $k^2+2$ bits {\em per node}, with a redundant sequence of parentheses apart from the signatures. They observe instead that the DFUDS description of a node is a function of its signature and of its depth, so the parentheses sequence
need not be stored: the signatures of the internal nodes in depth-first order, extended with one bit that marks the nodes in the last internal level,
are enough, for $k^2+1$ bits {\em per internal node}. 
The sum becomes a single sequential pass, in which subtrees present in only one operand are copied as one contiguous chunk; the product is the same recursion. %, with partial results not equipped with rmM-trees until the end.
\begin{figure}[t]
\centering
\includegraphics[width=0.75\linewidth]{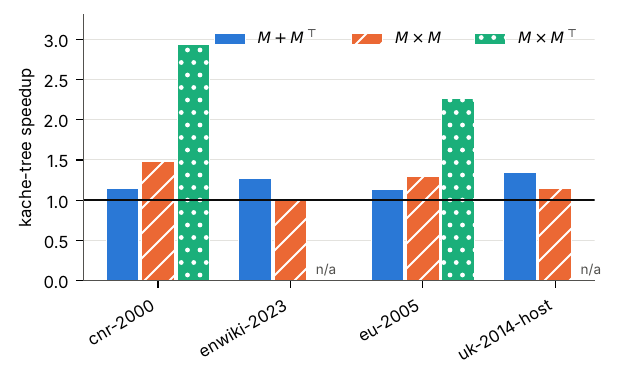}
\caption{Kache-trees against $k^2$-trees on real Web graphs, as speedups ($k^2$-tree time divided by kache-tree time, so values above $1$ mean that kache-trees are faster; data from \citet{farina2025cache}). The $k^2$-tree is smaller by a factor of $0.85$ in all cases.
The product $M\times M\T$ did not finish in two cases for lack of memory.}
\label{fig:kache}
\end{figure}
On synthetic matrices of side up to $16\,000$ and density $10^{-4}$ to $10^{-1}$ the kache-tree uses $12$--$18\%$ more space, is up to $60\%$ faster for sums (more so on sparser matrices, because more whole submatrices can be copied) and products of dense matrices; \cref{fig:kache} shows results on real graphs. The gain is attributed to cache misses: \texttt{perf} counts $31\%$--$42\%$ more misses in the $k^2$-tree for sums, and 
up to $225\%$ more for products. %The authors plan to implement the transitive closure, a fully cache-oblivious variant, and a Huffman coding of small blocks.

A second depth-first deployment \cite{carmona2025depthspire,carmona2025depth}, designed for other purposes, 
is discussed in \cref{sec:k2variants}.

\subsection{Qdags and prejoining}
\label{sec:qdags}

A different use of $k^2$-trees performs \emph{relational joins} on graph databases. \citet{arroyuelo2022optimal} 
store, for each distinct label $\ell$ in the graph, a matrix of the subgraph of the edges with label $\ell$, and
represent those matrices with $k^2$-trees. To perform the join, they introduce \emph{qdags}, which simulate the 
extension of each participating $k^2$-tree to a virtual $d$-ary relation (i.e., a $k^d$-tree), where $d$ is 
the number of attributes in the join. The join then boils down to a $k^d$-tree intersection (as seen above, where
now each node has $2^d$ quadrants) of all the involved qdags. This technique is shown to be worst-case-optimal (wco)
\cite{ngo2012worst,atserias2013size,veldhuizen2014triejoin} while using very small space (that of the
$k^2$-trees): $4.75$ bytes per edge in a Wikidata subgraph, whereas classical wco indexes use over $100$.
Their drawback is, as expected, a penalty factor $2^d$ in the intersection time.

\citet{arroyuelo2026boosting} introduce \emph{prejoining}: pick a subset of $d'<d$ attributes, project every involved matrix onto those, join the projections
(a lower-dimensional, hence much faster, join) and add the result as one more $k^{d'}$-tree in the main join. They prove that prejoining preserves worst-case optimality, and explore various prejoining strategies:
\emph{one-shot} (prejoin once, on all the joined attributes), \emph{leapfrog} (a sequence of prejoins on increasingly larger attribute sets, loosely simulating Leapfrog Triejoin \cite{veldhuizen2014triejoin}), \emph{kernel} (reduce every table by semijoin
with the one-shot prejoin) and \emph{weak kernel} (the same, attribute by attribute, using precomputed projections). Storing the two projections of each relation
as $1$-dimensional qdags increases the space by about $30\%$ (from $4.75$ to $6.17$ bytes per edge), and the speedups over the original algorithm reach three orders of magnitude on some query shapes, making them, on real Wikidata queries, faster on average than the other indexes of similar size.

The key idea that makes prejoining relevant for this survey is that a matrix multiplication, $C = A \times B$, can 
be expressed with relational joins ($\Join$) and projections ($\pi$), 
\[ C(a,c)=\pi_{a,c}\bigl(A(a,b)\Join B(b,c)\bigr).\]
In particular, a one-shot prejoin computes $I(b)=\pi_b(A)\Join\pi_b(B)$, which is
the intersection of the nonempty columns of $A$ with the nonempty rows of $B$, and then every $b\in I$ generates the Cartesian product of the column $b$ of $A$ and the row $b$ of $B$.
This is \emph{exactly Schoor's algorithm}~\cite{schoor1982fast} (\cref{sec:plain}), which had only been implemented on uncompressed
matrices~\cite{arroyuelo2025evaluating}, and was well known for its efficiency on sparse matrices; the new technique runs it directly on $k^2$-trees.

\Cref{fig:prejoin} shows the average time of the product of $2$ to $4$ matrices (P2--P4) on the Wikidata Graph Pattern
Benchmark subgraph ($81$M edges), compared with the recursive algorithm of \cref{sec:algebra}, and for a single product on the full graph ($959$M edges).
\begin{figure}[t]
\centering
\includegraphics[width=0.8\linewidth]{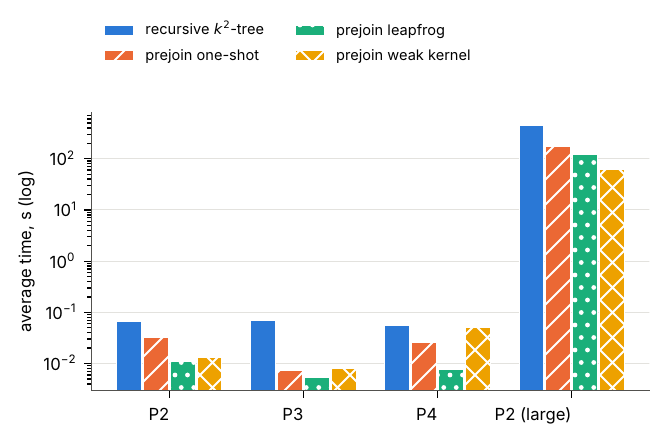}
\caption{Average time (seconds, log scale) of chains of $2$--$4$ matrix products (P2, P3, P4) using prejoin strategies on qdags, against the recursive $k^2$-tree algorithm, on the Wikidata Graph Pattern Benchmark subgraph; and a single product on the full Wikidata graph (P2 large). Data from Table~5 of \citet{arroyuelo2026boosting}.}
\label{fig:prejoin}
\end{figure}
Speedups are about $2$--$6\times$ for single products, $9$--$13\times$ for chains of three, up to $7\times$ for chains of four, and $2.5$--$7\times$ on the full graph; leapfrog is best on chains, weak kernel on the large graph.
To remain compositional, the product must also project the result on each of the two attributes. 

\subsection{Variants that exploit repetitions}
\label{sec:k2variants}

Two other approaches keep the underlying $k^2$-tree structure but exploit repeated submatrices, thereby providing a link to the upcoming sections (most closely, to the grammars of \cref{sec:grammar}).

\paragraph{Two-dimensional block trees (2D-BTs).} \citet{brisaboa2022two} extend the block trees of one-dimensional sequences to matrices. As in a $k^2$-tree, the matrix is recursively split into $k^2$ submatrices, but a block that is equal to an earlier occurrence of the same submatrix (possibly overlapping up to four adjacent blocks, so not necessarily aligned with the tree) is not subdivided: it becomes a leaf with a pointer to its first occurrence, as in Lempel--Ziv compression \cite{ziv1977universal}. This captures repetition, whereas a plain $k^2$-tree captures only sparseness and clustering. On Web graphs, the 2D-BT uses $60\%$--$80\%$ of the space of the $k^2$-tree, with cell access between $30\%$ faster and three times slower; construction is the main drawback (on \texttt{eu-2005}, $7{,}200$ seconds against $67$ for the $k^2$-tree). The paper supports access to cells and to regions, but does not address other operations. While it seems feasible to adapt \cref{alg:k2vec} to this format, it is not clear that one can retain $\Oh(a\log n)$ time due to the possibility of traversing paths that yield no points. Recursive algorithms for matrix--matrix products and point-wise operations can also be adapted, but again with unclear upper bounds.

\paragraph{Depth-first $k^2$-trees.} \citet{carmona2025depthspire,carmona2025depth} observe that the usual level-wise layout of the $k^2$-tree makes it hard to detect and share identical subtrees, and propose instead depth-first layouts: a plain one, one enriched with skip values for large subtrees, and one based on balanced parentheses in which identical subtrees are shared using pointers. On Web graphs, removing the identical subtrees improves compression (for example, to 73\% on their graph ID-2004). Further, some depth-first variants multiply Boolean matrices faster than the standard $k^2$-tree using the recursive algorithm of \citet{arroyuelo2025evaluating}, although not on all graphs (and the variant with the best compression, the one sharing submatrices, is the slowest). Their experiments concern matrix--matrix products, not products by a vector, which are mentioned only as a motivation for the sequential scans that the depth-first layout allows.

\subsection{Discussion}

The $k^2$-tree line shows that a compressed sparse matrix can support a \emph{general} algebra without decompression, with each operation costing roughly its uncompressed counterpart times a $\log n$ factor (\cref{tab:alg-complexity}), and the
trade-off between space and speed is explicit: $4\times$ less space for several times more time, or up to $60\%$ gain from a better memory layout for $18\%$ more space. It also shows the importance of the algorithm: the move
from recursive multiplication to Schoor-like prejoining yields larger gains than most layout improvements.
While some variants manage to exploit structure beyond sparsity and clustering, they do not do so in a computation-friendly form: they are still slower than a plain sparse format like CSR/CSC.

\section{Referencing and Set-Difference Formats}
\label{sec:reference}

\subsection{Origins}

The idea of expressing a row in terms of a similar one dates back at least to \citet{bookstein1991compression}, who proposed compressing a collection of correlated bit-vectors by
representing each as a set and storing, for a ``reference'' set, only the symmetric difference. Consider the
complete graph whose nodes are the sets, with edges weighted by the size of the symmetric difference of their
endpoints. A spanning forest of rooted trees that minimizes the cardinalities of the roots (stored explicitly)
plus the weights of the edges gives the best way of expressing the collection this
way. The idea was reinvented several times for tables and graphs~\cite{buchsbaum2000engineering,adler2001towards,bharat1998connectivity}.
\citet{bjorklund2001fast} seem to have been the first to observe that the weight of such a
forest also bounds the time to multiply matrices: if the product of a row $u$ of $A$ by a column $v$ of $B$ is known,
then the product of another row $u'$ by $v$ is obtained by multiplying only the entries in the symmetric difference
of $u$ and $u'$ with the corresponding entries of $v$ and using that to adjust the result.

On Web graphs, rows with adjacent URLs tend to be similar, and Boldi and Vigna's WebGraph
framework~\cite{boldi2004webgraph,boldi2005codes} exploits this \emph{copy property}: each adjacency list is encoded
with respect to a reference among the previous $W$ lists (the \emph{window}), by marking which entries of the
reference not to copy and listing the extra entries. The result is further compressed with universal integer
codes. Limiting the candidate references to a window keeps the cost of extracting a list cache-efficient and the
construction fast, and a good vertex ordering (lexicographic order of URLs, or label propagation
orderings~\cite{boldi2011layered,boldi2009permuting}) makes the lists in the window similar. This family of ideas has
also been optimized and extended by others~\cite{grabowski2011merging,grabowski2014tight}.
\citet{francisco2022graph,francisco2018exploiting} observed that
this representation also accelerates the product by a vector, which is the
starting point of this section (an earlier similar result
\cite{nishino2014accelerating} limited the technique to detecting full
suffixes shared among rows, instead of any subset of the rows).

\paragraph{Supporting access.} While accessing an arbitrary cell $A_{ij}$ is not particularly easy or practical in
this format, \citet{gagie2026set,gagie2026compressed} show how this could be done in $\Oh(\log n)$ time with sophisticated data structures that use essentially the space to represent all the symmetric differences, or $\Oh(\log\log n)$ time if using a constant number of times that space. They also support more complex operations.

\subsection{Matrix--vector products on a referencing format}

Let $A\in\mathbb{K}^{n\times m}$ over a ring $(\mathbb{K},+,\cdot)$, with rows $\Vv_1,\dots,\Vv_n$, and let 
$\mathbf{r}=(r_1,\dots,r_n)$ be a \emph{referencing vector}:
$r_i<i$ and $\Vv_{r_i}$ is the row used to encode $\Vv_i$, with $\Vv_0=\mathbf{0}$ and $r_i=0$ when no reference is used.
Define the {\em residual} matrix $A'$ whose $i$-th row is $\Vv_i-\Vv_{r_i}$ (we can use the technique on the 
Boolean semiring by extending the symbols to $\{-1,0,+1\}$; the residual values will stay inside this
set). Note that $\nnz(A')$ is the number of entries added plus the number of entries removed when
moving from the reference to the row.

\begin{proposition}[\cite{francisco2022graph}]\label{prop:ref}
Let $\ww$ be the vector with $w_i=\Vv_{r_i}\cdot\xx$. Then $A\xx=A'\xx+\ww$.
\end{proposition}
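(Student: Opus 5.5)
The plan is a row-by-row verification that uses only linearity of the inner product over the ring $\mathbb{K}$. Fix a row index $i$. By definition the $i$-th entry of $A\xx$ is $\Vv_i\cdot\xx$. The $i$-th row of the residual $A'$ is $\Vv_i-\Vv_{r_i}$, so the $i$-th entry of $A'\xx$ is $(\Vv_i-\Vv_{r_i})\cdot\xx$. Distributivity in $\mathbb{K}$, applied coordinatewise over the $m$ columns, rewrites this as $\Vv_i\cdot\xx-\Vv_{r_i}\cdot\xx=\Vv_i\cdot\xx-w_i$. Adding $w_i$ recovers $\Vv_i\cdot\xx$. Since this holds for every $i\in[1..n]$, the vector identity $A\xx=A'\xx+\ww$ follows.

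Two boundary points need a brief check.

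\textbf{Rows without a reference.} When $r_i=0$, the convention $\Vv_0=\mathbf{0}$ gives $w_i=\mathbf{0}\cdot\xx=0$, and the residual row is $\Vv_i$ itself. The identity then holds trivially.

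\textbf{The Boolean case.} Here subtraction is not available in $(\B,\vee,\wedge)$, so the argument needs a ring. I would handle this as the paper suggests: regard the $0/1$ entries as integers (or elements of $\mathbb{R}$), so the residual entries lie in $\{-1,0,+1\}$ and the computation takes place in $\mathbb{Z}$. If a Boolean answer is wanted, one applies the threshold $[\,\cdot>0\,]$ to each entry at the end. This is valid because each $(A\xx)_i$ computed over $\mathbb{Z}$ is the number of $j$ with $A_{ij}=x_j=1$, which is positive exactly when the Boolean product is $1$.

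Nothing here is a real obstacle; the only care needed is that the statement is read over a ring (or over the integer embedding in the Boolean case), not over a semiring. The algorithmic content of the proposition is that $\ww$ can be obtained without recomputing the inner products: because $r_i<i$, one has $w_i=(A\xx)_{r_i}$. The product can therefore be computed in increasing order of $i$, as $y_i=(A'\xx)_i+y_{r_i}$, in time $\Oh(\nnz(A')+n)$. This remark is not needed for the identity itself.
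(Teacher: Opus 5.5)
Your proof is correct and matches the paper's argument: both write $(A\xx)_i=\Vv_i\cdot\xx=(\Vv_i-\Vv_{r_i})\cdot\xx+\Vv_{r_i}\cdot\xx$ using linearity of the inner product. Your extra checks, the case $r_i=0$ and the integer embedding for Boolean matrices, agree with the conventions the paper sets up just before the statement.
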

\begin{proof}
$w_i=y_{r_i}$, where $\yy=A\xx$, and $y_i=\Vv_i\cdot\xx=(\Vv_i-\Vv_{r_i})\cdot\xx+\Vv_{r_i}\cdot\xx$.
\end{proof}
Because $r_i<i$, the entry $w_i=y_{r_i}$ is already known when $y_i$ is computed, which yields \cref{alg:ref}. The number
of operations is proportional to the number of nonzeros of $A'$, that is, to the size of the compressed
representation (plus $n$ for the additions of the references).

\begin{algorithm}[t]
\caption{Right product $\yy=A\xx$ from a referencing representation $(A',\mathbf{r})$~\cite{francisco2022graph}.}
\label{alg:ref}
\begin{algorithmic}[1]
\State $y_0\gets 0$
\For{$i=1,\dots,n$}
  \State $y_i\gets y_{r_i}+\sum_{j:\,A'_{ij}\neq 0}A'_{ij}\,x_j$
\EndFor
\State \Return $\yy$
\end{algorithmic}
\end{algorithm}

\paragraph{Left products (our derivation).}
The paper handles the left products that PageRank needs by storing the \emph{transposed} matrix with its own
reference vector (\cref{eq:pagerank} uses left products, whereas the representation is row-oriented). The
referencing structure also supports the left product $\xx\T=\yy\T A$ directly, by duality
(an instance of the general principle presented later, in \cref{sec:factorization}). Unrolling the references gives $\Vv_i=\sum_{k\in\mathrm{path}(i)}A'_{k\ast}$, where
$\mathrm{path}(i)$ contains $i,r_i,r_{r_i},\dots$ until the root. Therefore, $A'_{k\ast}$ contributes to all
the rows in $\mathrm{subtree}(k)$, the subtree of $k$ in the reference forest. This allows, if we precompute
$z_k=\sum_{i\in\mathrm{subtree}(k)}y_i$, to compute $\yy\T A=\sum_k z_k A'_{k\ast}$.
The values $z_k$ are precomputed in one pass over $k=n,n-1,\dots,1$, setting
first $z_k\gets y_k$ and then adding $z_i$ to $z_{r_i}$ when processing $i$; this is possible because $r_i<i$.
The cost is again $\Oh(\nnz(A')+n)$.

\paragraph{Vertical slicing.}
The quality of the method depends on how good the references are. For wide matrices it is hard to find a previous
row that matches in all columns, but easier to find one that matches in a narrow band of columns.
Francisco et al.\ therefore split $A$ vertically into $\tau$ submatrices $A_1,\dots,A_\tau$ of $\Theta(m/\tau)$
columns each, compress each with its own referencing vector, and compute $\yy=\sum_{t=1}^{\tau}A_t\xx_t$ where
$\xx_t$ is the corresponding slice of $\xx$. Setting $\tau=1$ gives the original method. %, and $\tau=m$ the schoolbook product.

\subsection{Differential compressed binary trees}

\citet{nelson2017queryable,nelson2019billion} (see also the dissertation~\cite{nelson2019thesis}) use the same idea of encoding each row as a difference with a previous row in a window, but encode the rows differently, with trees instead of integer codes.
Each row of the Boolean matrix is a \emph{compressed binary tree} (CBT): a binary tree of depth $\lceil\log_2 n\rceil$ over the column range, in which every node stands for a range of columns and is stored with a single bit, which is $1$ if the range contains at least a nonzero (and, at the last level, if the cell is a nonzero) and $0$ otherwise. Empty ranges are thus pruned, and the path to each nonzero descends to the last level. The binary tree is deployed in preorder, so it uses at most $2a_i(\log_2(n/a_i)+\Oh(1))$ bits, where $a_i=\nnz(A_{i\ast})$, and allows obtaining the 1s in the row with a full traversal. The dissertation also improves the encoding of branches that are full of 1s. 

In the \emph{differential} variant, row $i$ is stored as the CBT of its differences with a row $r_i$ in the window (so the 1s mark the cells where the rows differ), plus one bit in each leaf that tells whether the change inserts or deletes the cell, exactly as the residual $A'$ with entries $+1$ and $-1$ of this section. A product $A\xx$ then reuses $A_{r_i}\xx$ in the form $A_i\xx=A_{r_i}\xx+(A_i-A_{r_i})\xx$, and its time is proportional to the size of the compressed differences, as in \cref{alg:ref}; the encoding of $A'$ is what changes.

\subsection{Matrix--matrix products}
\label{sec:refmatmat}

\citet{francisco2022graph} only sketch matrix--matrix products. If $A$ has rows $\Vv_i$ and reference vector $\mathbf{r}$, and $B\T$ has
rows $\ww_j$ and reference vector $\mathbf{c}$, then for $C=A\times B$,
\begin{equation}\label{eq:matmat}
C_{ij}=C_{i,c_j}+C_{r_i,j}-C_{r_i,c_j}+(\Vv_i-\Vv_{r_i})\cdot(\ww_j-\ww_{c_j}),
\end{equation}
so that, when $C$ is computed in an order in which $C_{i,c_j}$, $C_{r_i,j}$ and $C_{r_i,c_j}$ are already known,
only the dot product of two \emph{residual} vectors remains. Those vectors are sparse, and choosing to iterate
over the support of the sparser one (or intersecting the supports adaptively \cite{barbay2008alternation}) may give an additional speedup.
This was left for future work, and, to our knowledge, \cref{eq:matmat} has not been evaluated.
Note that it still computes all $n^2$ entries of $C$, so it is useful only when the output is dense; for
sparse outputs the techniques of \cref{sec:quadtree} are more natural.

\citet{alves2025accelerating} independently define the Compressed Binary Matrix (CBM) format, which is the referencing format of this section applied to the product $C=A\times B$ of a binary adjacency matrix $A$ by a dense matrix of embeddings $B$, the core operation of training and inference of GNNs. Beyond an obvious column-wise computation of $C$, the product is done in two stages: an off-the-shelf sparse matrix multiplication computes $A'B$, and then the rows of the result are updated in topological order along the reference forest of $A'$, $C_{i\ast}\gets C_{i\ast}+C_{r_i\ast}$.
We derive a more general version of this idea later, in \cref{sec:factprod}.

\citet{nelson2019thesis} considers Boolean matrix--matrix products $C=A\times B$ when both are represented with references, concretely, the rows of $A$ and the \emph{columns} of $B$ are CBTs. An entry $C_{ij}$ is the Boolean inner product of the row $i$ of $A$ and the column $j$ of $B$, computed by traversing their two trees in preorder at the same time: the branches that are $0$ in either tree are skipped in both, and the computation stops with a 1 as soon as two leaves with a $1$ coincide. Looping over the columns of $B$ and then over the rows of $A$ builds $C$ column by column, directly as CBTs, so $C$ is already in the format required by the \emph{right} operand of the next product: products can then be composed in that direction, and the output has to be transposed to be used as the left operand. % a mi no :-) This resembles the recursive multiplication of the $k^2$-trees of \cref{sec:algebra}, in one dimension. 
Nelson describes this product for the plain encoding and does not report experiments on it for static matrices.

\subsection{Experimental results}

\Cref{fig:ref} (left) summarizes the evaluation of \citet{francisco2022graph} on Web and social graphs from the LAW
collection~\cite{boldi2004webgraph,boldi2011layered,boldi2018bubing}, using ten PageRank iterations. The residual matrix $A'$ was extracted from the
WebGraph representation, recompressed as a WebGraph, and the computation was performed on it (a positive and a
negative graph for the $+1$ and $-1$ entries).

\begin{figure}[t]
\centering
\includegraphics[width=\linewidth]{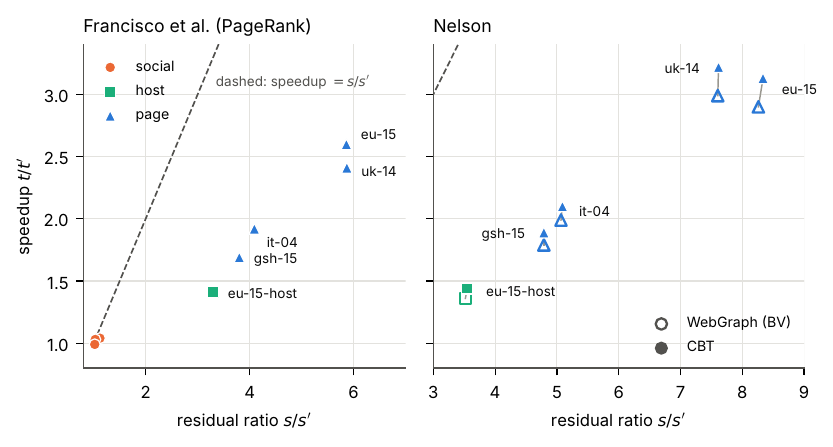}
\caption{Speedup $t/t'$ of a product by a vector with the residual matrix $A'$
over the product with $A$, as a function of the reduction in nonzeros $s/s'$
($s=\nnz(A)$, $s'=\nnz(A')$). Points on the dashed line would achieve a
speedup equal to the reduction. \emph{Left}: PageRank, window $W=64$ (except
$W=16$ for \texttt{gsh-2015}), data from \citet{francisco2022graph}.
\emph{Right}: data from \citet{nelson2019thesis}, with the WebGraph (BV) scheme (hollow marks) and compressed binary trees (filled marks) on the same graphs. The two panels come from different codes and machines, so only the comparison within a panel is meaningful, not across panels.}
\label{fig:ref}
\end{figure}

The results follow the compressibility of the graph. On page graphs, $A'$ has $17\%$--$26\%$ of the nonzeros of
$A$ and the speedups are between $1.7\times$ and $2.6\times$; on host graphs it has about $30\%$ and the
speedup is about $1.4\times$; social networks, which are known to be less compressible in this
sense~\cite{chierichetti2009compressing}, show no gain. The speedup is systematically below $s/s'$: profiling
showed that the nonzeros of $A'$ are more scattered in memory than those of $A$, so the plain product benefits from
contiguous accesses. Vertical slicing gave only a small further reduction of $\nnz$ (a few percent for
$32$ to $128$ slices), whereas the total size of the compressed submatrices grew considerably with $\tau$.
The practical bottleneck is the preprocessing: recompressing the graphs took hours for the largest instances.

The right of \cref{fig:ref} shows the results of \citet{nelson2019thesis}
on five Web graphs of up to $9.2\times10^{10}$ edges, compared with the
implementation of \citet{francisco2022graph}. The CBT versions are slightly
smaller ($2\%$--$9\%$), the number of nonzeros of the residual is slightly
smaller too, and the time with differences is between $17\%$ and $26\%$ lower;
the speedup over the product without differences (that is, from reusing
results) goes from $1.44\times$ to $3.22\times$, against $1.36\times$ to
$2.99\times$ with WebGraph. The author attributes the gain to a faster
decoding and to a better compression of the differences, which allows more
reuse. Once again, the speedup follows the reduction in nonzeros.

\subsection{Finding the best references}
\label{sec:setdiff}

The quality of a referencing format depends entirely on how the references are
chosen. Let us interpret rows $i$ as sets $S_i = \{ j, A_{ij} \neq 0 \}$, so 
the matrix $A$ is seen as a set of sets 
$\mathcal{S}=\{S_1,\dots,S_n\}$ over a universe $[1..m]$ (so
$\nnz(A)=\sum_i|S_i|$), and let $S_0=\emptyset$ be an additional empty set (whose use as a reference means that the row is stored explicitly).
Consider the complete graph on $\mathcal{S}\cup\{\emptyset\}$ where the edge between $S$ and $S'$ has weight $|S\triangle S'|$, the size of their symmetric difference.
A referencing vector \textbf{r} is exactly a spanning tree of this graph,
rooted at $\emptyset$ (each node $S_i$ points to its parent, $S_{r_i}$), the number of nonzeros $\nnz(A')$ of the residual matrix $A'$ is the weight of the tree, and the
time of the product of \cref{alg:ref} is $\Oh(\nnz(A')+n)$. (The condition $r_i<i$ used in \cref{alg:ref} is only a convenient way of stating that rows are processed in an order
in which the reference precedes the row, for instance a breadth-first order of the tree, which can always be obtained by renumbering the rows.) The best referencing vector is therefore a
\emph{minimum spanning tree} (MST) of this graph (the tree is the set of references $r_i$; we return to this view in \cref{sec:factorization}). See \cref{fig:mst} for an example.

\begin{figure}[t]
\centering
\begin{tikzpicture}[
  row/.style={circle,draw=black,fill=black!10,thick,minimum size=7.5mm,inner sep=0pt,font=\small},
  w/.style={midway,fill=white,inner sep=1.2pt,font=\scriptsize},
  mst/.style={very thick,draw=mstc},
  oth/.style={thin,dashed,draw=gray!70},
  setl/.style={font=\scriptsize,orange!80!black,align=center}]
\definecolor{mstc}{HTML}{2A78D6}
\node[row] (0) at (0.5,2.1) {$\emptyset$};
\node[row] (1) at (-4.5,0.2) {$1$};
\node[row] (2) at (-4.5,-1.5) {$2$};
\node[row] (3) at (-4.5,-3.2) {$3$};
\node[row] (4) at (-0.5,0.2) {$4$};
\node[row] (5) at (2.0,0.2) {$5$};
\node[row] (6) at (4.5,0.2) {$6$};
\node[setl,left=1pt of 1]  {$\{1..4\}$};
\node[setl,left=1pt of 2]  {$\{1..5\}$};
\node[setl,left=1pt of 3]  {$\{1..5,8\}$};
\node[setl,below=1pt of 4] {$\{3,4,7,8\}$};
\node[setl,below=1pt of 5] {$\{1,2,7,8\}$};
\node[setl,below=1pt of 6] {$\{5..8\}$};
\draw[oth] (1) to[bend left=40] node[w] {2} (3);
\draw[oth] (4) -- (5) node[w] {4};
\draw[oth] (5) -- (6) node[w] {4};
\draw[mst] (0) -- (1) node[w] {4};
\draw[mst] (0) -- (4) node[w] {4};
\draw[mst] (0) -- (5) node[w] {4};
\draw[mst] (0) -- (6) node[w] {4};
\draw[mst] (1) -- (2) node[w] {1};
\draw[mst] (2) -- (3) node[w] {1};
\end{tikzpicture}
\caption{Symmetric-difference graph of the rows of the running example,
with the extra node $\emptyset$. Edge weights are $|S\triangle S'|$; only some
edges are drawn. The solid blue edges form an MST rooted at $\emptyset$, of
weight $4+1+1+4+4+4=18=\mathrm{nnz}(A')$; rows $4$--$6$ are at distance $4$
from $\emptyset$ and from each other, so a reference among them (for example
$r_5=4$) gives the same $\mathrm{nnz}(A')$ but adds a pointer, and the edge
$1\!-\!3$ (dashed) of weight $2$ is not used because $3$ is at distance $1$
from $2$. Adding the $2$ reference pointers $r_2=1$ and $r_3=2$ gives the size $18 + 2 = 20$ of \cref{tab:example}.}
\label{fig:mst}
\end{figure}

This reduction goes back to \citet{bookstein1991compression}, was used by \citet{bjorklund2001fast} to speed up Boolean matrix products, and has been rediscovered several
times~\cite{buchsbaum2000engineering,adler2001towards,bharat1998connectivity}; it is also the basis of the Compressed Binary Matrix format that \citet{alves2025accelerating} use to accelerate GNNs.
\citet{gagie2026set} refine it by also allowing a second root $[1..m]$ (the universe), so that a row can be built by deleting elements from $[1..m]$ instead of inserting elements into $\emptyset$,
which benefits dense rows; the weight of the MST of the resulting graph is called the \emph{symdiff compressibility} $\Delta(\mathcal{S})$.
They also study the simpler \emph{insertion compressibility} $I(\mathcal{S})\ge\Delta(\mathcal{S})$, where
each set is described by the elements it adds to its largest subset in the collection (the dual of the containment entropy of \citet{alanko2025compact}). 
The products then cost $\Oh(\Delta(\mathcal{S})+n)$.

\paragraph{The cost of finding it.}
The MST is expensive to obtain. Just computing the weights of the $\binom{n}{2}$ edges takes $\Oh(ns) \subseteq \Oh(n^3)$ time, where $s=\nnz(A)$ \cite{gagie2026set}, and Prim's algorithm on the complete graph
$\Oh(n^2)$ more; this is also close to the $\Oh(ns+n^2\log n)$ cost given by \citet{alves2025accelerating}. Such a cost is infeasible for Web graphs, where $n$ is in the hundreds of millions or billions. 
Practical alternatives give up optimality or exploit special structure:
\begin{itemize}
\item \emph{Windows.} The WebGraph framework~\cite{boldi2004webgraph} restricts the candidates of row $i$ to the previous $W$ rows, which reduces construction time to $\Oh(Ws)$, and picks the best one greedily (this is a heuristic: the result is a forest, not an MST, of the
  restricted graph). It relies on an ordering that puts similar rows close to each other (URL order, label propagation~\cite{boldi2011layered}).
\item \emph{Approximate MSTs.} \citet{bjorklund2001fast} already used approximations of the MST, since the exact one is too costly.
Likewise, to avoid computing all the pairwise distances one could look for
similar rows with sketching and locality-sensitive
hashing~\cite{broder1998min,harpeled2012approximate} or approximate Hamming
distances~\cite{kowaluk2026approximate}; MinHash is already the first step of
biclique extraction (\cref{sec:biclique}).
\item \emph{An exact MST without all the weights.} \citet{gagie2026set} observe that in compressible matrices the edges of the MST are light. They build a suffix tree of the
  concatenation of the sorted sets, which gives in constant time the longest common prefix of any two sets (and thus allows enumerating the elements of $S\triangle S'$ one by one), and run a variant of Prim's algorithm that
  discovers weights incrementally, from smallest to largest. The time is $\Oh(s\log m+\sum_{S,S'}\min(\ell,|S\triangle S'|))$, where $\ell$ is the largest weight in the MST. This
  improves on the quadratic bounds whenever $\ell$ is small, that is, exactly when the matrix is compressible. 
\end{itemize}
Even so, the preprocessing is the practical bottleneck: the recompression of the largest graphs by \citet{francisco2022graph} took hours (this can be more manageable on smaller graphs: \citet{alves2025accelerating} reports 16 seconds on a 32-million-edge graph).

\paragraph{Referencing for GNNs.}
The technique of \citet{alves2025accelerating} to compute $C= A \times B$ with dense $B$ differs from the plain MST scheme in that a threshold $\alpha$ discards the references that save fewer than $\alpha$ operations, since traversing the tree has a cost; the graph becomes directed (a minimum cost arborescence replaces the MST) and the root gets a larger out-degree, which exposes more parallelism, at the price of losing useful references if $\alpha$ is too large.
%Diagonal scalings such as the normalization $D^{-1/2}(A+I)D^{-1/2}$ of GCNs are handled by scaling the columns of $A'$ and fusing the row scaling into the update.
They report speedups close to $5\times$ over the sparse matrix multiplication
of Intel MKL, almost $3\times$ for the inference of a two-layer GCN on the best graphs (co-paper graphs of average degree $57$--$75$), and no gain on graphs of average degree below $6$ (Cora, PubMed).

\paragraph{Beyond spanning trees.} 
The MST uses only the given rows as nodes. \citet{gagie2026compressed} explore
the possibility of adding additional graph nodes that are good references. For
example, adding an extra row $\{7,8\}$ to our running example allows rows 4,
5 and 6 to refer to it at cost $2$, thereby reducing $\Delta(\mathcal{S})$ to $14$
instead of $18$. Finding the minimum-weight MST with added rows then becomes
an {\em undirected Steiner tree} problem (the auxiliary sets are exactly the 
auxiliary nodes of the factorization of \cref{sec:factorization}).
The Steiner problem is NP-hard and APX-hard in
general graphs~\cite{chlebik2008steiner}, but  $\approx 1.4$-approximation
algorithms exist~\cite{byrka2013steiner}. Still, the MST on the original nodes 
is already a $2$-approximation in any metric, so auxiliary nodes could reduce
$\Delta(\mathcal{S})$ to a half at best.
Similar ideas could be applied for the grammar and biclique techniques of the
next sections. 

Other extensions explored \cite{gagie2026compressed} include operations like
taking the union or intersections of other rows. The optimization problem now
becomes the {\em multitarget circuit complexity}, which is hard to approximate
below any relevant ratio \cite{ilango2020nphardness}.

\subsection{Discussion}

Referencing formats are simple, effective on data with strong row-to-row similarity
(web graphs, co-authoring/co-citation graphs),
and support the right and left products with extremely simple algorithms. Their weaknesses are
(i)~reliance on a good row ordering or an expensive MST for construction, (ii)~the fact that the realized speedup is
bounded by memory locality rather than by the number of operations, and (iii)~the dependence on \emph{whole-row}
similarities, which makes them blind to repeated pairs of values in two columns that are
separated by columns whose contents vary. The grammar approach of the next section addresses the last point.

\section{Grammar-Compressed Matrices}
\label{sec:grammar}

Referencing formats exploit similarity between \emph{entire rows}. Many matrices, in particular those arising in machine learning,
instead contain \emph{partial} repetitions: the same combination of values appears in the same columns in many rows,
even if the rows differ elsewhere. This was observed in the Compressed Linear Algebra (CLA) system of
\citet{elgohary2018compressed,elgohary2019compressed}, which groups correlated columns and compresses each group with
lightweight schemes (offset lists, run-length encoding, dictionary coding) that support matrix--vector products directly.
\citet{ferragina2022improving} propose a more general mechanism: compress the matrix, once linearized,
with a \emph{grammar compressor}, and show that products can still be computed in time linear in the size of the grammar.

\subsection{The CSRV representation}

Their starting point is a variant of CSR that suits matrices with few distinct values. Let $V[1..k]$ be the table of
distinct nonzero values of $A\in\mathbb{S}^{n\times m}$ over a semiring
$(\mathbb{S},+,\cdot)$. The \emph{Compressed Sparse Row/Value} (CSRV) representation is a
sequence $S$ obtained by scanning the matrix row by row: for every nonzero
$A_{ij}$ append the pair $\langle \ell,j\rangle$ with
$V[\ell]=A_{ij}$, and at the end of each row append a separator $\$$. The length of $S$ is then $\nnz(A)+n$. The key
property is that the pair stores the \emph{column} together with the value index, so that two entries are the same
symbol only if they have the same value in the same column. The right product
$\yy=A\xx$ is done by a single scan
($y_i\mathrel{+}=V[\ell]\,x_j$ for each pair $\langle\ell,j\rangle$ in row
$i$) and the left product $\xx\T = \yy\T A$ with another
($x_j\mathrel{+}=y_i\,V[\ell]$). 
For example, rows $(1.2,\,3.4,\,5.6,\,0,\,2.3)$ and $(1.2,\,3.4,\,2.3,\,4.5,\,0)$ share
the prefix $\langle 1.2,1\rangle\langle 3.4,2\rangle$ (with the value indices of $V$), which a grammar can factor out (nonterminal $N_2$ in \cref{fig:grammar}, see next).

\subsection{Grammar compression}

A grammar-compressed representation of a string $S$ is a context-free grammar that generates only $S$. For simplicity,
we assume a \emph{straight-line program} (SLP)~\cite{lohrey2012algorithmics}: a set of rules
$N_i\to XY$, where $X$ and $Y$ are terminals or earlier nonterminals, and the expansion of the last rule is $S$. The \emph{size} of the grammar is the
number of rules $q=|R|$ (each with two symbols on the right-hand side). Finding the smallest grammar is NP-hard~\cite{charikar2005smallest,storer1982data}, but
practical compressors such as RePair~\cite{larsson2000offline} obtain good grammars heuristically. Grammars can
compress very repetitive inputs, within a logarithmic factor of the best Lempel--Ziv-based
compressors~\cite{rytter2003application,jez2015approximation,navarro2021indexing}.
RePair repeatedly finds the most frequent pair of adjacent symbols $XY$, replaces all
its occurrences by a new nonterminal $N\to XY$, and stops when no pair appears twice. Note that its output is not exactly an SLP: it is a set of
rules $R$ plus a final sequence $C$ of terminals and nonterminals, whose expansion is $S$ (one can regard $C$ as a last rule of arbitrary length). The size of the representation is
then $2|R|+|C|$.

To compress CSRV, the sequence $S$ is fed to the grammar compressor, which is forbidden from using the separator
$\$$ in any rule. The result is a triple $(C,R,V)$ in which the expansion of each nonterminal is a sequence of
$\langle\ell,j\rangle$ pairs, and $C = C_1 \$ C_2 \$ \cdots \$ C_n$ is the
final sequence, still containing the row separators; note $|C| \ge n$.
See \Cref{fig:grammar} again for an example.

\begin{figure}[t]
\centering\small
\begin{tabular}{@{}l@{\hspace{1.2em}}l@{}}
$A=\begin{pmatrix}
1.2&3.4&5.6&0&2.3\\
2.3&0&2.3&4.5&1.7\\
1.2&3.4&2.3&4.5&0\\
3.4&0&5.6&0&2.3
\end{pmatrix}$
&
$\begin{array}{@{}l@{}}
V=[1.2,\,1.7,\,2.3,\,3.4,\,4.5,\,5.6]\\[2pt]
S=\underbrace{\langle1,1\rangle\langle4,2\rangle}_{N_2}\,\underbrace{\langle6,3\rangle\langle3,5\rangle}_{N_4}\,\$\;
\langle3,1\rangle\langle3,3\rangle\langle5,4\rangle\langle2,5\rangle\,\$\cdots\\[2pt]
R=\{N_1\to\langle3,3\rangle\langle5,4\rangle,\;N_2\to\langle1,1\rangle\langle4,2\rangle,\;N_4\to\langle6,3\rangle\langle3,5\rangle,\;\dots\}
\end{array}$
\end{tabular}
\caption{A small matrix with its CSRV sequence $S$ (first rows) and some of the rules RePair would create (adapted from~\cite{ferragina2022improving}).
The pair $\langle3,1\rangle$ stands for the value $V[3]=2.3$ in column $1$; the same value in column $3$ is the different symbol $\langle3,3\rangle$.}
\label{fig:grammar}
\end{figure}

\paragraph{Supporting access.} Because the symbols of any $C_i$ partition the nonzero columns into contiguous ranges, we
can binary search $C_i$ to find its only symbol that could contain the cell at column $j$, and then navigate towards
the right terminal $\langle \ell,j\rangle$ by the rules, to determine that
$A_{ij}=V[\ell]$ or zero. If the grammar is balanced (i.e., of height
$\Oh(\log n)$, as in many grammar constructions), this enables accessing $A$
in $\Oh(\log n)$ time. Note that this does not work if we reorder columns to
improve compression (\cref{sec:reorder}).
See also \citet{tabei2016scalable}, which predates \citet{ferragina2022improving} in the use of grammar-compressed
matrices but still decompress whole rows or columns in order to operate the matrices.

\subsection{Products on the grammar}

Let $\mathrm{eval}_{\xx}(\langle\ell,j\rangle)=V[\ell]\,x_j$ and, for a nonterminal $N$ with expansion
$\langle\ell_1,j_1\rangle\cdots\langle\ell_h,j_h\rangle$,
$\mathrm{eval}_{\xx}(N)=\sum_{k=1}^{h}V[\ell_k]\,x_{j_k}$.
Then \cite[Lemmas~3.2--3.3]{ferragina2022improving} $\mathrm{eval}_{\xx}(N)=\mathrm{eval}_{\xx}(X)+\mathrm{eval}_{\xx}(Y)$ for a
rule $N\to XY$, and $y_r$ equals the sum of $\mathrm{eval}_{\xx}$ over $C_r$.
Hence the values of all the rules are obtained in one pass, in creation order (each nonterminal depends only on earlier ones),
and then each row sums the values of its symbols:

\begin{theorem}[Right product~\cite{ferragina2022improving}]\label{thm:right}
Given $(C,R,V)$, the product $\yy=A\xx$ can be computed in $\Oh(|C|+|R|)$ time using $\Oh(|R|)$ words of auxiliary space.
\end{theorem}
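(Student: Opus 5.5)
The proof is a dynamic program over the rules of $R$, followed by one pass over $C$. It relies on the two identities quoted just before the statement \cite[Lemmas~3.2--3.3]{ferragina2022improving}. The first is $\mathrm{eval}_{\xx}(N)=\mathrm{eval}_{\xx}(X)+\mathrm{eval}_{\xx}(Y)$ for a rule $N\to XY$. The second says that $y_r$ is the sum of $\mathrm{eval}_{\xx}$ over the symbols of $C_r$.

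Both identities follow from the same fact. The expansion of a nonterminal, or of the block $C_r$, is the concatenation of the expansions of its right-hand side. Moreover, $\mathrm{eval}_{\xx}$ of a sequence of pairs is additive over concatenation, since it is a sum of the terms $V[\ell_k]x_{j_k}$. For $C_r$ the expansion is exactly the CSRV encoding of row $r$, because the separator $\$$ never occurs inside a rule. That expansion therefore lists each nonzero $A_{rj}=V[\ell]$ once, as $\langle\ell,j\rangle$, and its evaluation is $\sum_j A_{rj}x_j=y_r$. This uses only associativity and commutativity of $+$ in the semiring.

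The algorithm runs in two phases.

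\begin{enumerate}
\item Allocate an array $E[1..|R|]$ and process the rules in creation order. For a rule $N_i\to XY$, set $E[i]\gets \mathrm{val}(X)+\mathrm{val}(Y)$. Here $\mathrm{val}$ of a terminal $\langle\ell,j\rangle$ is computed directly as $V[\ell]\,x_j$, and $\mathrm{val}$ of a nonterminal $N_k$ is $E[k]$. Since every right-hand side uses only terminals or earlier nonterminals, $E[k]$ is already final when it is read. This is a simple induction on creation order, using the first identity. Each rule costs $\Oh(1)$: two lookups, at most two multiplications and one addition. The phase therefore takes $\Oh(|R|)$ time and $|R|$ words.
\item Scan $C=C_1\$\cdots\$C_n$ left to right with a row counter $r$. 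For each symbol of $C_r$, add its $\mathrm{val}$ to $y_r$; at each $\$$, increment $r$. By the second identity this yields $\yy=A\xx$, in $\Oh(|C|)$ time. Initializing $\yy$ costs $n\le|C|$, since $C$ contains the $n-1$ separators and each row contributes at least its separator. Empty rows contribute nothing and correctly give $y_r=0$.
\end{enumerate}

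Apart from the input $\xx$ and the output $\yy$, the only auxiliary storage is $E$, which is $\Oh(|R|)$ words as claimed.

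The only point needing care is the creation-order invariant: a rule never references a later nonterminal. This holds for SLPs by definition and for RePair by construction. If a given grammar did not satisfy it, a topological sort of the rule DAG in $\Oh(|R|)$ time would restore it without changing the bounds.
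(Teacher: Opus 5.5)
Your proposal is correct and is essentially the paper's argument: evaluate every rule once, in creation order, via $\mathrm{eval}_{\xx}(N)=\mathrm{eval}_{\xx}(X)+\mathrm{eval}_{\xx}(Y)$, storing the values in an $\Oh(|R|)$-word array, and then obtain each $y_r$ by summing the values of the symbols of $C_r$, exactly as in \textsc{RightProduct} of \cref{alg:grammar}. The only difference is that you derive the two cited lemmas yourself, from the additivity of $\mathrm{eval}_{\xx}$ over concatenation and the fact that no rule contains \$, whereas the paper simply takes them from \citet{ferragina2022improving}.
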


The left product $\xx\T=\yy\T A$ is the transposed computation. Define $\mathrm{rows}(N)$ as the rows whose compressed
representation uses $N$, and $\mathrm{sum}_{\yy}(N)=\sum_{\ell\in\mathrm{rows}(N)}y_\ell$. Initializing an array $W$ so
that $W[i]$ holds the sum of $y_\ell$ over the rows $\ell$ such that $C_\ell$ contains $N_i$, and scanning the rules
\emph{backwards} from $N_q$ down to $N_1$, a rule $N_j\to XY$ transfers $W[j]$ to each of its symbols: to $W[i]$ if the
symbol is a nonterminal $N_i$ (with $i<j$), or to $x_k\mathrel{+}=V[h]\,W[j]$ if it is a terminal $\langle h,k\rangle$.
When a rule is reached, all the rules that mention it have already been processed (they have larger indices), so
$W[j]=\mathrm{sum}_{\yy}(N_j)$.

\begin{theorem}[Left product~\cite{ferragina2022improving}]\label{thm:left}
Given $(C,R,V)$, the product $\xx\T=\yy\T A$ can be computed in $\Oh(|C|+|R|)$ time using $\Oh(|R|)$ words of auxiliary space.
\end{theorem}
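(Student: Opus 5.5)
The plan is to establish the invariant $W[j]=\mathrm{sum}_{\yy}(N_j)$ at the moment rule $N_j$ is processed, and then show that the terminal updates assemble exactly $\xx\T=\yy\T A$. Write $\xx\T=\yy\T A$ as $x_k=\sum_r y_r A_{rk}=\sum_r y_r \sum V[h]$, where the inner sum ranges over the occurrences of terminals $\langle h,k\rangle$ in the expansion of $C_r$. Since each occurrence of a terminal in the expansion of a row is one nonzero, every nonzero $A_{rk}$ with value $V[h]$ must contribute $y_r V[h]$ to $x_k$ exactly once. I will interpret $\mathrm{rows}(N)$ as a multiset: a row that uses $N$ several times, directly or through other nonterminals, counts with that multiplicity. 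This is the notion that makes the invariant correct, and it is harmless because each CSRV symbol sequence uses each column at most once per row.

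The initialization is a single scan of $C=C_1\$\cdots\$C_n$ with $W$ zeroed. Each symbol occurrence in $C_r$ either adds $y_r$ to $W[i]$ if it is $N_i$, or directly performs $x_k \mathrel{+}= V[h]\,y_r$ if it is a terminal $\langle h,k\rangle$. This costs $\Oh(|C|+m)$, where the $m$ accounts for zeroing $\xx$. I treat the size of $\xx$ as part of the output, consistently with \cref{def:cf}, or I absorb it into the bound. Next I define the \emph{total weight} of $N_j$ as the sum of $y_r$ over all occurrences of $N_j$ in the full derivation trees of all the rows. I prove by downward induction on $j$ that, when rule $j$ is processed, $W[j]$ equals this total weight. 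Every occurrence of $N_j$ in a derivation is either a direct occurrence in some $C_r$, which was accounted for at initialization, or a child of an occurrence of some $N_{j'}\to XY$ with $j'>j$, since SLP rules only refer to earlier nonterminals. By the induction hypothesis, $W[j']$ was already final when rule $j'$ transferred it, and the transfer adds $W[j']$ once for each position of $N_j$ in the right-hand side. Hence every occurrence is counted exactly once.

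For the terminals, the same decomposition shows that each occurrence of $\langle h,k\rangle$ in the derivation of row $r$ receives $y_r V[h]$ exactly once. This happens either directly from $C_r$ or from its parent rule, whose $W$ value is final when the transfer is made. Summing over occurrences gives precisely $x_k=\sum_r y_rA_{rk}$, using only distributivity of the semiring; no subtraction is needed. The backward scan does $\Oh(1)$ work per rule, because each rule has two right-hand symbols. The total time is therefore $\Oh(|C|+|R|)$, and the only auxiliary space is $W$, which has $|R|$ words.

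The main obstacle is the multiplicity issue just mentioned. One must make sure that a nonterminal reachable along several paths, or appearing twice in a row's expansion, is weighted by the number of derivation occurrences and not by the number of distinct rows. The induction over occurrences in derivation trees, rather than over sets of rows, is what I would use to handle it cleanly.
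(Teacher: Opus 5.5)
Your proposal is correct and is essentially the paper's argument: one scan of $C$ seeds $W$ and makes the direct terminal contributions to $\xx$, and a backward scan of the rules follows, with $W[j]$ already final when rule $N_j$ is reached because every rule that mentions $N_j$ has a larger index. Your occurrence-based (multiset) invariant and your remark that zeroing $\xx$ costs $\Oh(m)$ are careful refinements that the paper leaves implicit; however, in a CSRV grammar a nonterminal occurs at most once in each row's derivation, since two occurrences can neither nest in an SLP nor be disjoint without repeating a column in the row, so the multiplicity issue you single out as the main obstacle never arises and the paper's set-based $\mathrm{rows}(N)$ gives the same value.
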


\Cref{alg:grammar} gives both procedures, which are an instance of the forward and backward substitution of the general scheme of
\cref{sec:factorization}.

\begin{algorithm}[t]
\caption{Right and left products on a grammar-compressed matrix $(C,R,V)$~\cite{ferragina2022improving}.
$\mathrm{val}(\alpha)$ is $V[\ell]\,x_j$ for a terminal $\langle\ell,j\rangle$ and $W[i]$ for a nonterminal $N_i$; $q$ is the number of rules $|R|$.}
\label{alg:grammar}
\begin{algorithmic}[1]
\Procedure{RightProduct}{$C,R,V,\xx$}
  \For{each nonterminal $N_k \to XY$, $k=1,\dots,q$} 
     \State $W[k]\gets \mathrm{val}(X)+\mathrm{val}(Y)$
  \EndFor
  \For{each row $r$}
     \State $y_r\gets\sum_{\alpha\in C_r}\mathrm{val}(\alpha)$
  \EndFor
\EndProcedure
\Procedure{LeftProduct}{$C,R,V,\yy$}
  \State $W[1..q]\gets 0$;\; $\xx\gets 0$
  \For{each row $r$ and each symbol $\alpha\in C_r$}
     \State add $W[i] \mathrel{+}= y_r$ if $\alpha=N_i$, or $x_j\mathrel{+}=V[\ell]\,y_r$ if $\alpha=\langle\ell,j\rangle$
  \EndFor
  \For{each nonterminal $N_k \to XY$, $k=q,\dots,1$} 
     \For{$\alpha\in\{X,Y\}$}
        \State add $W[i] \mathrel{+}= W[k]$ if $\alpha=N_i$, or $x_j\mathrel{+}=V[\ell]\,W[k]$ if $\alpha=\langle\ell,j\rangle$
     \EndFor
  \EndFor
\EndProcedure
\end{algorithmic}
\end{algorithm}

\subsection{Implementation and variants}

The prototype~\cite{ferragina2022improving} builds the CSRV sequence (as $32$-bit integers: $0$ for $\$$,
$1+\ell m+j$ for $\langle\ell,j\rangle$ with $0\le j<m$) and runs a modified
RePair that does not pair separators. Three encodings of the
grammar trade time for space: \texttt{re\_32} (plain $32$-bit integers, the fastest), \texttt{re\_iv} (packed arrays with
$1+\lceil\lg \max\rceil$ bits per entry, $\max$ being the largest integer to encode) and \texttt{re\_ans} (packed rules, and $C$ compressed with
the asymmetric numeral systems entropy coder of \citet{moffat2020large}). The more compact
encodings reduce peak memory but not the number of operations, so they are slower. For multithreading the matrix is split
into $b$ blocks of rows that are compressed and multiplied independently; the left products are summed at the end, and the value table $V$ is
shared.

\subsection{Column reordering}
\label{sec:reorder}

RePair factors out pairs of \emph{adjacent} symbols that repeat; adjacency is crucial for it to run in linear
time (we will see in \cref{sec:biclique} the complications that arise when allowing non-adjacent symbols).
Therefore, placing correlated columns next to each other helps. Since the order in which the pairs are written
inside each row segment $C_r$ is free, any permutation of the columns can be applied without altering the 
multiplication algorithms and without storing the permutation, because each pair carries its original column
information. Finding the best order is NP-hard for table compression in general~\cite{buchsbaum2003improving}, and heuristics are standard in
that area~\cite{johnson2004compressing,vo2007compressing,chakrabarti2004fully}. Ferragina et al.\ define a
column-column similarity $\mathrm{CSM}_{ij}=R^{\mathrm{PNZ}}_{ij}/n$, where $R^{\mathrm{PNZ}}_{ij}$ is the number of
repeated pairs of nonzeros in columns $i$ and $j$ (this models how well RePair would compress the two columns if
placed together), computed by hashing or sorting in $\Oh(m^2n)$ or $\Oh(m^2n\log n)$ time, and optionally pruned to the
top $k$ values per column. They then use it in four reordering algorithms:
\begin{itemize}
\item \textbf{LKH}: a travelling-salesman formulation solved with the Lin--Kernighan--Helsgaun heuristic~\cite{helsgaun2000effective};
\item \textbf{PathCover}: a Kruskal-like algorithm that picks edges by decreasing weight if they extend disjoint paths;
\item \textbf{PathCover+}: a variant that recomputes the similarity of the nodes to a path after each extension (never the best in
  their experiments);
\item \textbf{MWM}: a maximum-weight matching on a bipartite graph that induces a predecessor/successor relation.
\end{itemize}

\subsection{Experimental results}

The experiments use eight matrices from machine-learning repositories, stored as $8$-byte doubles, and compare with
\texttt{gzip}, \texttt{xz} and CLA; the benchmark alternates right and left products
($\yy_i=A\xx_i$, $\xx_{i+1}\T\propto\yy_i\T A$), 500 times, which mimics conjugate-gradient iterations.
The numbers 
below are taken from Tables~1 and~5 of \citet{ferragina2022improving}.
\cref{fig:gram-both} shows space and time for matrix--vector products on some
matrices.
\Cref{fig:gram-scatter}
relates, for each matrix, how much \texttt{re\_iv} compresses it to how much faster or slower the product is than on the dense matrix.

\begin{figure}[t]
\centering
\includegraphics[width=\linewidth]{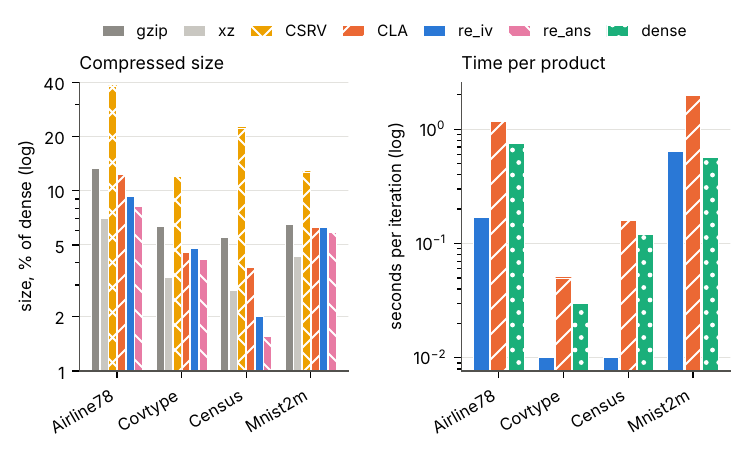}
\caption{Left: compressed size (percentage of the dense matrix, log scale) of
\texttt{gzip}, \texttt{xz}, CSRV and the grammar-based formats \texttt{re\_iv}
and \texttt{re\_ans}, with no column reordering (Table~1 of
\citet{ferragina2022improving}). Right: seconds per iteration (log scale) of
the matrix--vector benchmark for \texttt{re\_iv} (16 threads, with column
reordering), CLA (80 threads, time including compression) and the dense matrix
(Table~5 of the same paper).}
\label{fig:gram-both}
\end{figure}

\begin{figure}[t]
\centering
\includegraphics[width=0.62\linewidth]{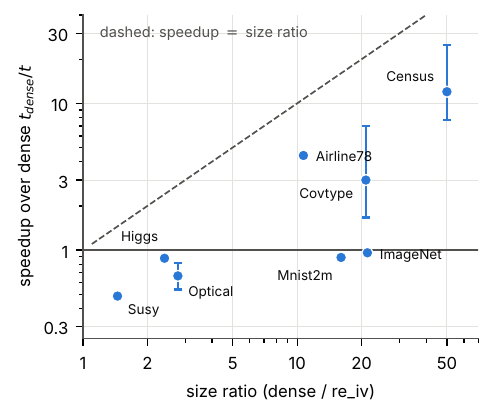}
\caption{Size ratio (dense size over \texttt{re\_iv} size) against speedup of the matrix--vector product over the uncompressed matrix, for the eight matrices of \citet{ferragina2022improving} (\texttt{re\_iv} with column reordering and 16 threads; Table~5 of that paper). The dashed line is speedup $=$ size ratio; the bars show the uncertainty caused by rounding the times to $0.01$\,s. Points below the horizontal line at $1$ are slower than the dense product.}
\label{fig:gram-scatter}
\end{figure}

The main findings are: (i)~the size is usually smaller than that of \texttt{gzip} and often not far from that of \texttt{xz}, which cannot be multiplied without decompression; (ii)~with 16 threads the
peak memory is for most matrices between $6\%$ and $50\%$ of the dense size, and for
five of the eight inputs below $20\%$; (iii)~compared with CLA, \texttt{re\_ans} is smaller on all the matrices on which CLA could run, except \texttt{Higgs}
(\texttt{re\_iv} is marginally larger than CLA on \texttt{Covtype} and \texttt{Mnist2m}), the peak memory of the grammar-based
representations is between $3.1\times$ and $19.1\times$ smaller, and they are at least twice as fast (CLA's running
time includes its compression, and CLA runs inside a Java system, which makes the comparison of absolute times delicate; CLA failed on \texttt{Susy}); (iv)~as \cref{fig:gram-scatter} shows, on the most compressible matrices (\texttt{Airline78},
\texttt{Covtype}, \texttt{Census}, with size ratios from $10$ to $50$) the compressed multiplication is even faster than the one on the dense matrix, by factors between $3$ and $12$; on
the least compressible ones (\texttt{Susy}, \texttt{Higgs}, \texttt{Optical}, with ratios below $3$) the product is slower than the dense one, and for \texttt{Mnist2m} and \texttt{ImageNet} the two are about equal.
Compression itself is slower than CLA on some matrices; yet it is a one-time cost.

The compression performance is mainly driven by the time to run RePair, which is linear in $\nnz(A)$.
For example, building the variant \texttt{re\_iv} on matrices with 285--400 million edges with 16 threads
takes 1.0 to 2.5 minutes. While there is no available data for direct comparisons, this seems to be considerably
faster than the compression times reported in \cref{sec:reference}.

With respect to column reordering, LKH is orders of magnitude slower than the others and is not competitive. The best of PathCover and MWM, applied per block of rows,
gave size reductions with respect to no reordering of up to $16\%$ (on \texttt{Covtype}) and $10\%$
(on \texttt{Airline78}), and reductions of the peak memory during the product between $5\%$ and $15\%$ on half of the matrices.
Reordering sometimes hurts: on \texttt{Mnist2m} it left memory unchanged and increased time by $5\%$.
For \texttt{Airline78} the memory reduction translates into a $25\%$ reduction of the average running time; on \texttt{Susy} reordering slightly increases the peak memory, with no significant change in time.

\subsection{Discussion and related work}

Grammar compression has also been used for graphs~\cite{claude2010fast,maneth2018grammar}
and for bipartite sparse binary matrices~\cite{yang2016bipartite}. The distinctive features of~\citet{ferragina2022improving}
are the explicit handling of the \emph{values} (so that real matrices are covered), products in time linear in the size of the grammar, and the column reordering step. The approach is
a natural candidate for becoming a new compressor within a CLA-like framework. The paper lists, as
open directions, row permutations and co-clustering, semiring-annotated data (joins computed by replacing $+$ by OR and
$\times$ by AND), columnar databases, and graphs; see also~\cite{tosoni2025toward} for lossless compressed formats in the context of greener matrix operations.
Limitations are the space for compression (RePair needs $\Oh(|S|)$ words of memory), the dependence of the left product on a backward scan of the rules, which
restricts the choice of encoders for $R$ (few compressors support right-to-left access), and the disappointing single-threaded
running time for large matrices, so blocks of rows and parallelism are needed.

\section{Biclique Decompositions}
\label{sec:biclique}

\subsection{Idea and representation}
\label{sec:biclique-idea}

A \emph{biclique} of a directed graph is a pair of node sets $(S,T)$ such that every node of $S$ has an edge to every node of $T$;
in the adjacency matrix it is an all-ones submatrix, up to permutation of rows and columns. A biclique with $|S|=s$ and $|T|=t$
describes $st$ edges with $s+t$ node identifiers, so large bicliques give large savings. \citet{feder1995clique} studied this kind of \emph{clique partition} as a way to compress graphs and, in the same
paper, to speed up graph algorithms by running them on the compressed graph. \citet{buehrer2008scalable} mined such dense patterns in Web graphs with pattern-mining techniques and replaced each
one by a \emph{virtual node} $w$ with edges $S\to w\to T$, which is the form later used by \citet{karande2009speeding}
to speed up algorithms on compressed Web graphs (although, as noted in~\cite{francisco2022graph}, that approach
requires correcting the results of the computation). \citet{hernandez2014compressed} designed a
scalable extraction method for Web and social graphs, which is the one used by the works surveyed here.

\begin{definition}[Biclique-based representation]\label{def:biclique}
Let $G=(V,E)$ be a directed graph, with adjacency matrix $A$. A biclique-based representation is a pair $(\mathcal{D},R)$, where $\mathcal{D}=\{(S_r,T_r)\}_r$ is a set of bicliques of $G$ and $R=E\setminus\bigcup_rS_r\times T_r$ is the \emph{residual} graph, the edges that
are in no biclique. We write $D_A = \sum_r \mathbf{1}_{S_r}\mathbf{1}_{T_r}\T$ and $R_A$ for the matrices of the edges covered by $\mathcal{D}$ and of $R$, so that $A=D_A+R_A$ (Boolean sum).
\end{definition}
Edges may belong to several bicliques (duplicates are allowed, because they
can lead to a better representation). The space is $\sum_r(|S_r|+|T_r|)+|R|$
words, with each set stored contiguously (the residual graph sorted by rows) and a bitmap to mark where each set starts; in practice
the identifiers are further compressed with differential encoding and universal codes~\cite{boldi2005codes}. 

Finding a set of bicliques that minimize space is NP-hard, hence extraction is heuristic.
The extraction algorithm of \citet{hernandez2014compressed} works in four steps: (1)~it computes a small MinHash signature~\cite{broder1998min} of each adjacency list, to estimate similarity cheaply; (2)~it clusters lists with similar signatures; (3)~inside each cluster, it sorts the elements of each list by their frequency in the cluster, so that frequent elements come first; and
(4)~it builds a prefix tree of the sorted lists, from which the bicliques are
extracted. This is repeated for several iterations with parameters such as the minimum degree of the nodes that are processed (\texttt{minAdjNodes}) and the minimum number of edges of the
extracted bicliques. The format is competitive in space
for both Web graphs and social graphs, which is its main advantage over referencing formats (\cref{sec:reference}), where
social networks are hard to compress.
The extraction heuristics make capturing non-contiguous sets of columns (and rows) feasible, even if still slower
than the linear-time RePair grammar construction of \cref{sec:grammar}. For example, \citet{araya2026compressed}
report 15 minutes to extract the bicliques of a graph with 640 million edges. We return to this point later.

\paragraph{Theoretical background.}
The size of a biclique decomposition is the total number of vertices of the bicliques. \citet{krapivin2026optimal} show that every $n$-vertex graph has a biclique partition (every edge in exactly one biclique) of total size about $\tfrac12 n^2/\lg n$, that this is tight, and that for edge density $\gamma$ the optimal size is about $\tfrac{h_2(\gamma)}{2}\,n^2/\lg n$, where $h_2$ is the binary entropy; they give deterministic constructions in $\Oh(n^2)$ time, and $\Oh(m)$ in sparse graphs with $m$ edges.
They note that this yields $\Oh(n^2/\log n)$-time algorithms for independent set
and cut queries. Since the product by a vector costs time proportional to that
size (\cref{sec:biclique-vec}), this implies that any Boolean matrix can also
be multiplied by a vector in $\Oh(n^2/\log n)$ operations after the decomposition (a consequence we draw).
These are worst-case bounds; the algorithms below are heuristics that exploit the structure of real graphs, and they also allow a residual set of edges not covered by any biclique.

\paragraph{Supporting access.} While \citet{hernandez2014compressed} showed how to extract all the nonzero cells
of a matrix row or column in optimal time, they cannot efficiently compute an
individual cell 
$A_{ij}$ (without extracting the whole row $i$ or the whole column $j$). This is because the edge $(i,j)$ can
appear in any biclique (or in the residual matrix). While they use a format
that allows jumping to the next biclique $(S,T)$ where $i \in S$ or $j \in T$,
the search takes $\Oh(|\mathcal{D}|)$ time in the worst case.

\paragraph{A space comparison across families.}

\citet{araya2026compressed} provide a (rough) comparison of compressed space across
the families we have studied, for binary matrices. \cref{fig:bic-space} shows the results in
graphic form, for seven graphs of $2.3$M to $640$M edges (from LAW, SNAP and
the Network Repository).
The comparison includes standard $k^2$-trees (\cref{sec:quadtree}), WebGraph as a representative of reference compression
(\cref{sec:reference}), and RePair32, an implementation of the compression of
\citet{ferragina2022improving} for binary matrices, as a representative of grammar compression
(\cref{sec:grammar}). It also shows the CSR/CSC baseline of
\citet{arroyuelo2025evaluating}.

\begin{figure}[t]
\centering
\includegraphics[width=\linewidth]{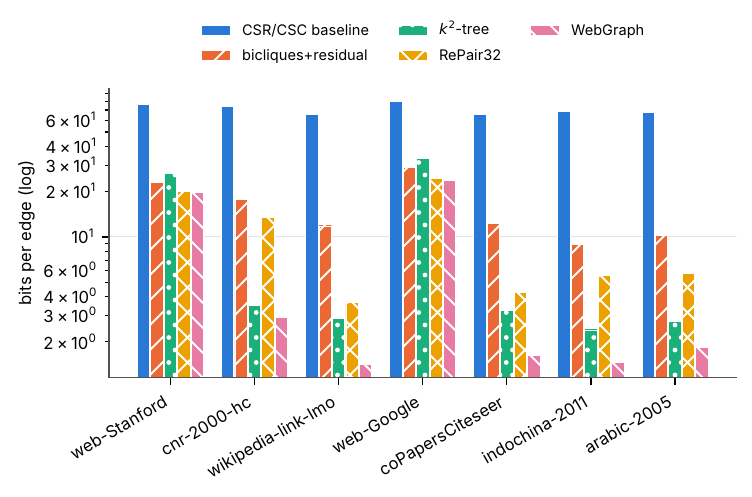}
\caption{Bits per edge of the CSR/CSC baseline, the biclique-based representation (adaptive extraction), the $k^2$-tree, RePair32 and WebGraph, on the input matrices (log scale; data from \citet{araya2026compressed}).}
\label{fig:bic-space}
\end{figure}

It can be seen that, in most cases, WebGraph uses much less space than the
others, followed by $k^2$-trees. Those use in turn much less space than RePair32, and
bicliques use the most space. Still, bicliques use only $13\%$--$36\%$ of the 
space of the baseline. 
On the other hand, only bicliques and $k^2$-trees have implemented
matrix--matrix
multiplications and, as shown in \cref{sec:bicliqueexp}, bicliques are much faster.

\subsection{Products by a vector}
\label{sec:biclique-vec}

\begin{algorithm}[t]
\caption{Right product $\yy=A\xx$ from a biclique decomposition
$\{(S_r,T_r)\}$ plus residual $R_A$~\cite{francisco2022graph}.}
\label{alg:biclique}
\begin{algorithmic}[1]
\State $\yy\gets 0$
\For{each biclique $(S_r,T_r)$}
  \State $c_r\gets\sum_{j\in T_r}x_j$
  \For{each $i\in S_r$}
     \State $y_i\gets y_i+c_r$
  \EndFor
\EndFor
\For{each residual edge $(R_A)_{ij}=1$}
  \State $y_i\gets y_i+x_j$
\EndFor
\State \Return $\yy$
\end{algorithmic}
\end{algorithm}

The biclique structure
enables computing the product as in \cref{alg:biclique} \cite{francisco2022graph}.
Its time is $\Oh\bigl(\sum_r(|S_r|+|T_r|)+\nnz(R_A)+n\bigr)$, that is,
proportional to the size of the compressed graph plus the output.
The same procedure with the roles of $S_r$ and $T_r$ exchanged ($c_r=\sum_{i\in S_r}y_i$, then add $c_r$ to $x_j$
for every $j\in T_r$) computes left products. A biclique is thus a shared
sub-computation that is evaluated once, in a way that resembles a nonterminal of a grammar, with the difference that
it shares a set of rows \emph{and} the same set of columns; a grammar rule on a CSRV sequence $\langle\ell,j\rangle\cdots$
factors out a set of columns that are \emph{consecutive in a row} (possibly after reordering) among the rows that contain
that subsequence, and can be nested arbitrarily, whereas a biclique has depth one.
In this sense, bicliques are the simplest instance of the general scheme of
\cref{sec:factorization}.

\begin{figure}[t]
\centering
\includegraphics[width=0.62\linewidth]{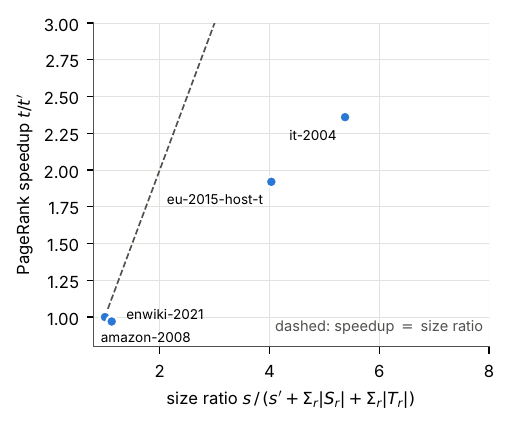}
\caption{PageRank on biclique decompositions, data from
\citet{francisco2022graph}: speedup $t/t'$ as a function of the size ratio
$s/(s'+\sum_r|S_r|+\sum_r|T_r|)$, with $s=\nnz(A)$, $s'=\nnz(R_A)$ residual edges and the sums counting the vertices of the bicliques. The dashed line is speedup $=$ size ratio. \texttt{eu-2015-host-t} is the transposed matrix.}
\label{fig:bic-pr}
\end{figure}

\paragraph{Experimental results.}
\cref{fig:bic-pr} shows the results on PageRank, obtained by
\citet{francisco2022graph} on top of the extraction software of
\citet{hernandez2014compressed}.
As with referencing, the speedup grows with the reduction ratio: with a ratio near $1$ the (small) cost of processing the bicliques
dominates and the method is slightly slower than the plain product, whereas on \texttt{it-2004} the speedup is $2.36\times$ (with a size ratio of $5.38$) and
on the transposed \texttt{eu-2015-host} it is $1.92\times$ (ratio $4.03$). Remarkably, the biclique sizes
(the totals of $|S_r|$ and $|T_r|$) have a super-linear effect on $s'$: a relatively small description removes many edges; on \texttt{it-2004} the bicliques add about $60$ million vertices to remove about $1\,000$ million edges.

\subsection{Boolean matrix--matrix products}
\label{sec:bicliquemult}

\citet{araya2026compressed} extend the biclique representation to support products of two Boolean matrices $C = A\times B$, with $A\in\B^{n\times m}$ and $B\in\B^{m\times p}$, and make it \emph{compositional}: the output is
also a biclique-based representation, so products can be chained. The idea builds on Schoor's algorithm~\cite{schoor1982fast}, which is also the basis of the prejoining technique of \cref{sec:qdags}.

\paragraph{Schoor's algorithm revisited.}
We recalled in \cref{sec:plain} the algorithm for $C=A\times B$, with $A$ stored by columns (CSC) and $B$ by rows (CSR): each $t$ in the intersection of the nonempty columns of $A$ and the nonempty rows of $B$ contributes a \emph{biclique} to $C$, formed by the rows $i$ of column $t$ of $A$ and the columns $j$ of row $t$ of $B$. The compact implementation of \citet{arroyuelo2025evaluating} takes $\Oh(\nnz(A)\nnz(B)\log m/m)$ time and $2(\nnz(C)+n+p)$ words of space, and is the baseline in the experiments below.

\paragraph{Products on bicliques.}
Write $A=D_A+R_A$ and $B=D_B+R_B$ (\cref{def:biclique}). Then
\begin{equation}\label{eq:bicmat}
A\times B \;=\; R_A\times R_B \;+\; R_A\times D_B \;+\; D_A\times R_B \;+\; D_A\times D_B ,
\end{equation}
and each of the four terms is computed with an adaptation of Schoor's algorithm, always producing bicliques rather than individual cells, except for the first one:
\begin{itemize}
\item \emph{Residual by residual} ($R_A\times R_B$) is the original algorithm of \citet{arroyuelo2025evaluating}, and produces the residual of the output in CSR format.
\item \emph{Matrix by bicliques} ($R_A\times D_B$). For each biclique $(S,T)$
of $\mathcal{D}_B$, intersect the nonempty columns of $R_A$ with $S$. Every common element
$t$ gives the rows of column $t$ of $R_A$. The result is a single biclique whose row set is the concatenation (after removing duplicates) of the rows obtained
  from $R_A$ for all common $t$, and whose column set is $T$.
\item \emph{Bicliques by matrix} ($D_A\times R_B$) is symmetric: for each biclique $(S,T)$ of $A$, intersect $T$ with the nonempty rows of $R_B$; the output biclique has rows $S$ and, as columns, the union of the columns of the rows of $R_B$ found.
\item \emph{Bicliques by bicliques} ($D_A\times D_B$). A table $\mathit{marks}(D_B)$ lists, for each vertex $v$ appearing in some row set $S$ of a biclique of $\mathcal{D}_B$, the column sets $T$ of all the bicliques of $\mathcal{D}_B$ whose row set contains $v$. For each biclique $(S,T)$ of $A$, every element of $T$ is looked up in $\mathit{marks}$,
  and the output biclique has rows $S$ and, as columns, the union of the column sets found.
\end{itemize}
The output is then $D_C=R_A\times D_B+D_A\times R_B+D_A\times D_B$ (Boolean
sum) and $R_C=R_A\times R_B$, which is again a biclique-based representation (some edge of $R_C$ may also be in $D_C$, which does not affect correctness). This composition is the simplest case of a general product of compressed matrices, which we present in \cref{sec:factprod}.
Compositionality lets matrices stay in this format \emph{throughout} a computation (and makes it easy to add transposition and sums to the algebra). Its drawback is that the output bicliques may be worse than those that extraction would find on $C$ (for example, bicliques with $|S|=1$ or $|T|=1$ are better in the residual),
so the output can be \emph{recompressed} at any time, by converting it to CSR
(using $\mathit{marks}$ to concatenate and deduplicate the column sets of each
row) and running the extraction again. Only the CSR of each matrix is stored;
the CSC needed by Schoor's algorithm is built on the fly, since the cost of
doing so is very small compared to that of the multiplication (the baseline of
\cref{sec:baseline-ops}, instead, stored both CSR and CSC).

\paragraph{Usage scenarios.}
The same algorithms support three scenarios: (i)~bicliques used only as an
\emph{accelerator}, with matrices in CSR on input and output (extraction
and conversion back to CSR is part of the multiplication process);
(ii)~bicliques as the permanent representation of inputs and outputs, with recompression of the output; and (iii)~the same without recompression.

\paragraph{Boolean sums (our derivation).} 
It is easy to compute $C=A+B$ as $D_C = D_A+D_B$ and $R_C=R_A+R_B$ (Boolean sums) in time proportional to the 
compressed spaces. That is, we simply take the union of the bicliques and of the residual edges, exploiting the
fact that we allow repeated edges. While the result is correct, it possibly uses more space than the result of 
recompressing $C$. An intermediate alternative is to improve the representation, for example eliminating duplicates
in $R_A+R_B$, eliminating bicliques contained in others, detecting bicliques that can be extended, and so on.
For example, contained bicliques can be detected by using Bloom filters and checking the cases that pass the 
filters. For extending cliques we can compute signatures for each $S_k$ in $\mathcal{D}_A$
and $S_l$ of $\mathcal{D}_B$; if the signatures coincide we can check that $S_k=S_l$ and, if so, merge bicliques 
$(S_k,T_k)$ and $(S_l,T_l)$ into one, $(S_k,T_k\cup T_l)$. We can analogously merge $S_k$ and $S_l$ when 
$T_k=T_l$. 

\subsection{Experimental results for matrix--matrix products}

\paragraph{Adaptive biclique extraction.}
\label{sec:adaptive}

\begin{figure}[t]
\centering
\includegraphics[width=\linewidth]{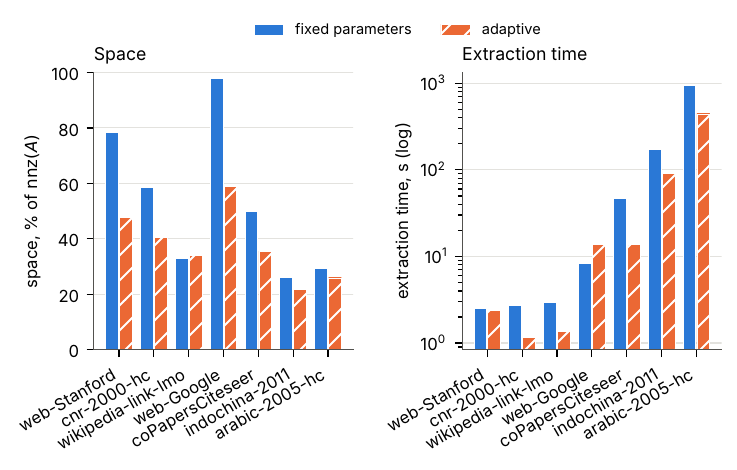}
\caption{Biclique extraction with fixed and adaptive parameters, ten
iterations (data from \citet{araya2026compressed}). Left: space of the
representation, $100\,(\nnz(R_A)+\sum_i(|S_i|+|T_i|))/\nnz(A)$, as a percentage of the number of edges. Right: extraction time in seconds (log scale).}
\label{fig:bic-extract}
\end{figure}

The extraction algorithm of \citet{hernandez2014compressed} sets its parameters arbitrarily at the start and does not use information about the graph, so tuning it is cumbersome, and the extraction is the dominant cost of scenario (i). \citet{araya2026compressed} make it adaptive. First, to target large bicliques and reduce the work per iteration, the first iterations process only the nodes of
high degree: \texttt{minAdjNodes} is initialized to the $90$th percentile of the degree distribution (that is, only the $10\%$ of nodes with highest degree are processed). Second, after each iteration, if the number of bicliques extracted is below $0.8$ times the historical average per iteration, \texttt{minAdjNodes} is reduced by $20\%$, which lets later iterations reach nodes of lower degree.
\Cref{fig:bic-extract} compares the adaptive extraction with the fixed-parameter one (ten iterations in both): the adaptive version is better in compression and in time in almost all cases, usually taking $45\%$ of the time (as little as $29\%$) and, measured in bits per edge, $61\%$--$88\%$ of the space of the non-adaptive result. The exceptions are \texttt{web-Google}, where the adaptive version
takes $162\%$ of the time but the non-adaptive one practically fails to compress (it keeps $98\%$ of the space of the plain edge list, against $59\%$ for the adaptive one), and \texttt{wikipedia-link-lmo}, where the adaptive result is slightly larger ($104\%$ of the space).

\paragraph{Multiplications.}
\label{sec:bicliqueexp}
\citet{araya2026compressed} extend the C++ implementation of
\cite{arroyuelo2025evaluating} and measure matrix--matrix multiplication times on the same graphs of
\cref{sec:biclique-idea} on a $256$\,GB AMD EPYC server, with sequential code. To measure products of the form $A\times B$ they compute $A^2$ and charge the extraction time of the two matrices.
We summarize the main findings; \cref{fig:bic-mult} gives the data. 

\begin{figure}[t]
\centering
\includegraphics[width=\linewidth]{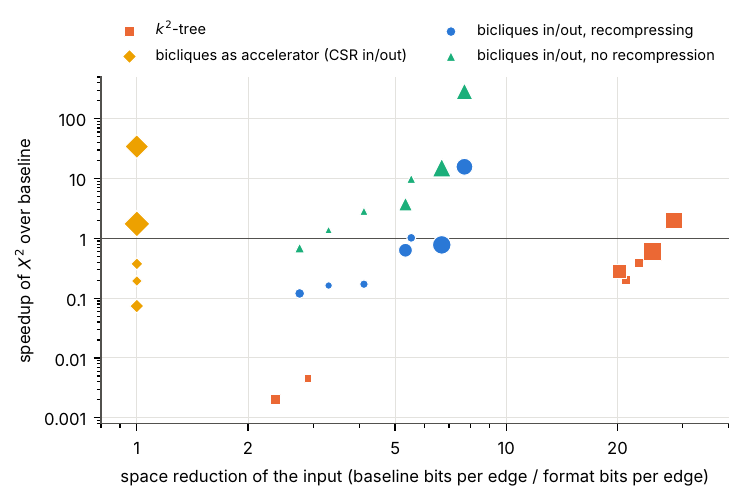}
\caption{Space reduction of the input (bits per edge of the CSR baseline over bits per edge of the format) against speedup in the time to compute $A^2$ with respect to the CSR implementation of Schoor's algorithm, for the seven graphs of \citet{araya2026compressed} (log--log scale; marker size grows with the number of edges; values below $1$ in the vertical axis are slowdowns). In the accelerator scenario the matrices stay in CSR, so the space reduction is $1$ and the extraction time is included.}
\label{fig:bic-mult}
\end{figure}

\begin{itemize}
\item \emph{Bicliques as accelerator} (scenario i; the diamonds of \cref{fig:bic-mult}). The matrices stay in CSR and the bicliques are used only to speed up the product. Even counting extraction, the method is up to $34.3\times$ faster than the baseline on \texttt{indochina-2011} and $1.75\times$ on the largest graph (\texttt{arabic-2005}), whereas on small matrices, which the baseline multiplies in a few seconds, it is slower.
  Extraction is the dominant cost, which is why the improvements of \cref{sec:adaptive} are key.
\item \emph{Bicliques as representation, with recompression} (scenario ii; the
circles). The output uses $18\%$--$38\%$ of the space of the baseline (which
stores only CSR); on large matrices it is competitive with the baseline in time ($30\%$--$60\%$ slower, and $16\times$ faster in one case), and it is slower by up to $8\times$ on small ones.
  It is almost always faster than the $k^2$-tree, by up to $59\times$ (the $k^2$-tree wins only once, by $18\%$), although the output of the $k^2$-tree is generally smaller by a factor of $2.3$--$2.6$.
\item \emph{Without recompression} (scenario iii; the triangles). The
multiplication is up to $20\times$ faster than in scenario ii, the compressed
space degrades by $7\%$--$40\%$ in most cases, the output uses $20\%$--$52\%$
of the space of the baseline (storing only CSR), and it is almost always faster than the baseline, by up to $289\times$ (and slower in one case, by $44\%$).
\item \emph{Powers of a matrix.} Matrices become much denser when multiplied: the number of edges grows by about an order of magnitude with each squaring, and matrices that are products are much more compressible with bicliques (which is not surprising in light of Schoor's algorithm, which generates them as bicliques). When computing $A^4$ with and without recompression of $A^2$, both are faster than the baseline (up to $13.6\times$) and the $k^2$-tree
  (up to $586\times$), and recompressing $A^2$ makes the whole process almost twice as fast as not doing so, so intermediate recompression pays off. The two largest matrices could not be raised to the fourth power for lack of memory.
\end{itemize}

Overall, \cref{fig:bic-mult} shows a trade-off between space and time: the $k^2$-tree compresses the input the most ($20$--$28$ times on five of the graphs) but is slower than the baseline in six of the seven cases; bicliques compress less ($3$--$8$ times) and, without recompression, give the largest speedups. The small graphs, with the smallest markers, are always below the large ones, because the fixed cost of the extraction is only paid back when the product is expensive.

\subsection{Discussion}

Bicliques are the most \emph{structured} family: they have the clearest semantics (they capture communities of nodes that point to the same targets) and they work well on graphs that are not handled by referencing. 
Exploiting the fact that Schoor's algorithm produces bicliques by construction, they feature rather natural 
compositional matrix--matrix multiplication algorithms, the only one implemented and evaluated in the family
of the factorization type (\cref{sec:factorization}, presented next). Their
weaknesses are relatively large size compared to other compression methods, and
difficulty of extraction: finding large bicliques is a heuristic search.

As a
byproduct, a biclique decomposition is a form of \emph{lossless graph summarization}; \citet{yang2016bipartite} propose
bipartite grammar-based representations of sparse binary matrices which, in the language of \cref{sec:factorization}, sit
between this family and the previous one. We are not aware of an experimental study of bicliques for \emph{real}-valued matrices, although
\cref{alg:biclique} extends immediately to a decomposition into blocks in which all the nonzeros have the same value $\alpha$ (the block contributes $\alpha\,c_r$ instead of $c_r$).
The matrix--matrix algorithms are limited to Boolean matrices; the authors leave transitive closure on bicliques, parallel and GPU versions, and weighted and dynamic graphs, for future work.

\section{A Common View: Sparse Straight-Line Factorizations}
\label{sec:factorization}

The three families of \cref{sec:reference,sec:grammar,sec:biclique} look different, but they share a simple algebraic structure, which makes it easy to see why the same kind of
guarantee holds for all of them, and which we find the most useful way to
think about them. In this section we abstract that structure, show how each
family instantiates it, and derive from it a way of multiplying two compressed
matrices. The framing is ours; it is not stated in the original papers, though it is abstracted from
their algorithms. The $k^2$-tree family of \cref{sec:quadtree}, which as we saw follows a different principle, is not covered by it.

\subsection{Definition and products by a vector}
\label{sec:factor-vector}

\begin{definition}[Sparse straight-line factorization]
A \emph{sparse straight-line factorization} of $A\in\mathbb{K}^{n\times m}$, over a ring $(\mathbb{K},+,\cdot)$ consists of sparse matrices
$P\in\mathbb{K}^{n\times q}$, $Q\in\mathbb{K}^{q\times q}$ (strictly lower triangular, hence nilpotent), $T\in\mathbb{K}^{q\times m}$ and
$T'\in\mathbb{K}^{n\times m}$ such that
\begin{equation}\label{eq:factorization}
A \;=\; P\,(I-Q)^{-1}\,T \;+\; T' .
\end{equation}
Its size is $\nnz(P)+\nnz(Q)+\nnz(T)+\nnz(T')$ (an identity matrix, when used for $P$, is implicit and not counted).
\end{definition}

To interpret this formula, recall that $C=A\times B$ can be understood as $C_{i\ast} = \sum_j A_{ij}B_{j*}$, that is, the entries of $A_{i\ast}$ give the coefficients of the linear combination of the rows of $B$ that form the row $C_{i\ast}$. If $A$ is sparse, we can also see $A_{i\ast}$ as telling which rows of $B$ is $C_{i\ast}$ made of.
Now consider vectors (or ``fragments'') $F_{k\ast}\in\mathbb{K}^{m}$ defined by
$F_{k\ast} = T_{k\ast}+\sum_{i<k}Q_{ki}\,F_{i\ast}$, that is, a sparse set of column positions (the nonzeros of $T_{k\ast}$) plus a sparse combination of earlier fragments (earlier because $Q$ is lower-triangular). Letting $F$ be the matrix with the fragments $F_{k\ast}$ as rows, we have $F = T + QF$, or $F = (I-Q)^{-1}T$.\footnote{Since $I-Q$ is lower-triangular with 1s in the diagonal, it has an inverse; we use this inverse for notational convenience only, we never compute it.}
The matrix $P$ then says
which fragments each row of $A$ is made of, and $T'$ holds the entries of $A$ that are not covered by any shared node (the \emph{residual}). Equation~\eqref{eq:factorization} is then $A = PF + T'$, that is,
$A_{i\ast}=\sum_kP_{ik}F_{k\ast}+T'_{i\ast}$.

The intuition is that of a \emph{straight-line program} (a sequence of assignments, each using only earlier values) that builds each row of $A$, seen as a sparse vector indexed by the column positions $1,\dots,m$ (see \cref{fig:factorization}; when computing $\yy=A\xx$, the node $x_j$ holds the entry of $\xx$ at column position $j$).
The $q$ \emph{auxiliary nodes} (in the middle in the figure) stand for shared sub-computations, representing the fragments $F_{k\ast}$. Finally, $P$ sends those values from $F$ to the output vector $\yy$. Matrix $T'$ sends values directly from $\xx$ to $\yy$.

Note that the entries of the four matrices belong to a ring in general;
Boolean matrices are regarded as $0/1$ matrices over the integers, and the
factors may then contain entries $-1$, for instance in the residual of a
referencing format.\footnote{Over a commutative semiring $(\mathbb{S},+,\cdot)$, the same
factorization works by writing $\sum_{i<q}Q^i$ instead of
$(I-Q)^{-1}$, and \cref{thm:fact} holds with semiring operations.
Referencing formats that \emph{delete} elements need a ring, however. 
On the other hand, bicliques may overlap only if the sum
is idempotent (as for $\vee$ or $\min$).\label{fn:semiring}}

\begin{figure}[t]
\centering
\begin{tikzpicture}[
  nd/.style={draw,circle,minimum size=6.5mm,inner sep=0pt,font=\small},
  sh/.style={nd,fill=blue!10},
  ar/.style={-{Latex[length=1.8mm]},thick},
  lb/.style={font=\footnotesize,inner sep=1pt}]
\node[font=\small\bfseries,align=center] at (0,3.0) {column\\position $x_j$};
\node[font=\small\bfseries,align=center] at (3.0,3.0) {auxiliary\\nodes $F_{k\ast}$};
\node[font=\small\bfseries,align=center] at (6.0,3.0) {rows\\$\yy_i$};
\foreach \i/\y in {1/1.8,2/0.6,3/-0.6,4/-1.8}{\node[nd] (x\i) at (0,\y) {$x_{\i}$};}
\node[sh] (u1) at (2.2,1.1) {$F_{1\ast}$};
\node[sh] (u2) at (3.8,0.2) {$F_{2\ast}$};
\node[sh] (u3) at (2.2,-1.3) {$F_{3\ast}$};
\foreach \i/\y in {1/1.8,2/0.6,3/-0.6,4/-1.8}{\node[nd] (y\i) at (6.0,\y) {$y_{\i}$};}
\draw[ar] (x1) -- (u1) node[lb,midway,above] {$T$};
\draw[ar] (x2) -- (u1);
\draw[ar] (x3) -- (u3);
\draw[ar] (x4) -- (u3);
\draw[ar] (x2) -- (u2);
\draw[ar] (u1) -- (u2) node[lb,midway,above right] {$Q$};
\draw[ar] (u2) -- (y2) node[lb,midway,above right] {$P$};
\draw[ar] (u1) -- (y1);
\draw[ar] (u3) -- (y3);
\draw[ar] (u3) -- (y4);
\draw[ar,dashed] (x4) to[out=-25,in=-155] node[lb,midway,below] {$T'$ (residual edges)} (y4);
\end{tikzpicture}
\caption{The evaluation graph of a sparse straight-line factorization (schematic): the inputs $x_j$ correspond to the column positions of $A$ and the outputs $y_i$ to its rows. Edges $T$ go from column positions to nodes, $Q$ between nodes (from earlier to later), $P$ from nodes to rows, and $T'$ directly from column positions to rows.
The right product $\yy=A\xx$ propagates values from left to right; the left product propagates them from right to left over the same edges. The cost of either is the number of edges.}
\label{fig:factorization}
\end{figure}

\paragraph{Right and left products.}
The right product $A\xx$ is evaluated by \emph{forward substitution}: compute $\ww=T\xx$,
then $\mathbf{u}_k=\ww_k+\sum_{i<k}Q_{ki}\mathbf{u}_i$ for $k=1,\dots,q$ (so $\mathbf{u}=(I-Q)^{-1}T\xx$), and finally $\yy=P\mathbf{u}+T'\xx$.
The left product is the transpose of this computation: $\mathbf{g}=P\T\yy$, then backward substitution on $\mathbf{z}$
with $\mathbf{z}=(I-Q\T)^{-1}\mathbf{g}$ (that is, $\mathbf{z}_k=\mathbf{g}_k+\sum_{i>k}Q_{ik}\mathbf{z}_i$ for $k=q,\dots,1$), and finally $\xx=T\T\mathbf{z}+T'{}\T\yy$.
\Cref{fig:factorization} shows the idea again: the products now traverse the same graph in opposite directions.

\begin{theorem}\label{thm:fact}
Given a sparse straight-line factorization of size $s$, both $A\xx$ and $\yy\T A$ can be computed with $\Oh(s)$ arithmetic operations and $\Oh(q+n+m)$ words of working space.
\end{theorem}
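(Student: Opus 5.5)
The proof follows the forward and backward substitutions described just before the statement. We check correctness algebraically and then count operations against the nonzeros of each factor.

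\emph{Right product.} Compute $\ww=T\xx$ by scanning the nonzeros of $T$ (stored row-wise), with one multiply--add each. This costs $\Oh(\nnz(T)+q)$ including initialization. Next, for $k=1,\dots,q$ set $\mathbf{u}_k=\ww_k+\sum_{i<k}Q_{ki}\mathbf{u}_i$, scanning row $k$ of $Q$. Because $Q$ is strictly lower triangular, every $\mathbf{u}_i$ used is already final. By induction on $k$ this gives $(I-Q)\mathbf{u}=\ww$, so $\mathbf{u}=(I-Q)^{-1}T\xx$ without ever forming the inverse, at cost $\Oh(\nnz(Q)+q)$. Finally compute $\yy=P\mathbf{u}+T'\xx$ by scanning the nonzeros of $P$ and $T'$, at cost $\Oh(\nnz(P)+\nnz(T')+n)$. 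Correctness is then immediate from \eqref{eq:factorization}: $\yy=P(I-Q)^{-1}T\xx+T'\xx=A\xx$.

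\emph{Left product.} Take the transpose, $A\T=T\T(I-Q\T)^{-1}P\T+T'{}\T$. Compute $\mathbf{g}=P\T\yy$ by scanning the nonzeros of $P$ and scattering $P_{ik}y_i$ into $\mathbf{g}_k$; no transposed copy is needed. Then solve $(I-Q\T)\mathbf{z}=\mathbf{g}$ by backward substitution, $k=q,\dots,1$. The push formulation avoids column access to $Q$: initialize $\mathbf{z}=\mathbf{g}$, and when index $k$ is finalized, scatter $Q_{ki}\mathbf{z}_k$ into $\mathbf{z}_i$ for each $i<k$ in row $k$. When index $k$ is reached, all contributions from rows $i>k$ have already been added, so $\mathbf{z}_k$ is final. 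This is exactly the argument used for \cref{thm:left}. Lastly scatter $T\T\mathbf{z}$ and $T'{}\T\yy$ into $\xx$ through the nonzeros of $T$ and $T'$. Each nonzero of each factor is touched once, so the cost is again $\Oh(s)$.

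\emph{Accounting.} The working space consists of $\ww$ and $\mathbf{u}$ (or $\mathbf{g}$ and $\mathbf{z}$), which can share a single array of $q$ words, plus the input and output vectors of $n$ and $m$ words. This gives $\Oh(q+n+m)$. The only subtle point is the operation count. Initializing vectors costs $\Oh(q+n+m)$ rather than $\Oh(s)$, so the claim of $\Oh(s)$ arithmetic operations should be read as counting only the multiply--add steps, which are exactly one per nonzero. An auxiliary node or row with no nonzeros incurs initialization but no arithmetic. When $P=I$ is implicit, it contributes $q=n$ copy steps, which are absorbed into the $\Oh(n)$ term. This bookkeeping, rather than any algebraic step, is what I expect to need care.
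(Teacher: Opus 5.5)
Your proposal is correct and takes the same approach as the paper's proof: forward substitution in the order $1,\dots,q$ gives $A\xx$, the transposed backward substitution in the order $q,\dots,1$ gives $\yy\T A$, each nonzero of $P$, $Q$, $T$, $T'$ is used in exactly one multiply--add, and the strict triangularity of $Q$ ensures that every dependency is resolved before it is needed. Your additions are sound refinements of the paper's terse argument: the push form of the backward substitution needs only row access to $Q$, and you read ``$\Oh(s)$ arithmetic operations'' as excluding the $\Oh(q+n+m)$ initialization.
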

\begin{proof}
In each of the substitutions every nonzero of $P$, $Q$, $T$ and $T'$ is used exactly once, in one multiplication and one addition, and the vectors $\ww,\mathbf{u}$ (or $\mathbf{g},\mathbf{z}$) have $q$ entries. The triangular structure of $Q$ guarantees that when node $k$ is processed all the nodes it depends on have been processed (in the order $1,\dots,q$ for the right product, and in the reverse order for the left product).
\end{proof}

\paragraph{Why it is useful.}
\Cref{thm:fact} is the only algorithmic fact behind the results of \cref{sec:reference,sec:grammar,sec:biclique}, and it
has four consequences that apply to all the formats at once. 
\begin{enumerate}[label=(\roman*)]
\item The \emph{compression ratio relative to the plain sparse format}, $\nnz(A)/s$, is the reduction in arithmetic operations, and hence an upper bound on the speedup one can expect from arithmetic alone. 
\item Left products cost exactly the same as right products, so there is no need to store the transpose, which halves the space for algorithms such as PageRank or conjugate gradient that need both. 
\item The \emph{depth} of the nodes in $Q$
(the longest chain of dependencies) is the minimum number of sequential steps and thus it governs parallelism.
For example, referencing trees may be deep, bicliques have depth one, and
grammars can be made of logarithmic depth by design.
\item Different families are different ways of \emph{finding} a small factorization, so that comparing them amounts to comparing the quality and the cost of the corresponding search problems, not the multiplication algorithms.
\end{enumerate}

\subsection{The three families as instances}
We show how the families of \cref{sec:reference,sec:grammar,sec:biclique} are particular cases of the general principle, with the notation of those sections.

\begin{itemize}
\item \emph{Referencing} (\cref{sec:reference}): It encodes each row $A_{i\ast}$ as a referenced row $A_{r_i\ast}$ with $r_i<i$ (if any) plus a (hopefully sparse) {\em residual} row $A'_{i\ast}$. This fits in the general scheme with 
$q=n$, $P=I$, $Q_{i,r_i}=1$ (if it exists) and $0$ elsewhere, $T=A'$ and $T'=0$, that is, $A=(I-Q)^{-1}A'$:
row $i$ is the residual row $A'_{i\ast}$ plus the (already computed) row of its reference, and $Q$ is the forest of references.
\item \emph{Grammars} (\cref{sec:grammar}): The $q$ nodes are the nonterminals; $Q_{ki}=1$ if the right-hand side of $N_k$ mentions $N_i$ (and $0$ if not), and
  $T_{kj}=V[\ell]$ if it mentions the terminal $\langle \ell,j\rangle$; $P$ records which nonterminals appear in the part of the final sequence $C$ that encodes each row, and $T'$ does the same for the terminals that remain in $C$.
\item \emph{Bicliques} (\cref{sec:biclique}): the $q$ nodes are the bicliques $(S_r,T_r)$; $P$ and $T$ are the indicator matrices
  of the sets $S_r$ and $T_r$, $Q=0$, and $T'=R_A$ is the residual matrix (the cells not in any biclique): $A=R_A+\sum_r \mathbf{1}_{S_r}\mathbf{1}_{T_r}\T$.
\end{itemize}

\paragraph{A worked example.}
Consider the Boolean matrix
\[
A=\begin{pmatrix}
1&1&1&1&0&0&0&0\\
1&1&1&1&1&0&0&0\\
1&1&1&1&1&0&0&1\\
0&0&1&1&0&0&1&1\\
1&1&0&0&0&0&1&1\\
0&0&0&0&1&1&1&1
\end{pmatrix},\qquad \nnz(A)=27 .
\]

\begin{table}[t]
\centering\small
\caption{Three sparse straight-line factorizations of the matrix $A$ of the running example. Sizes count the nonzeros of $P$, $Q$, $T$ and $T'$ (with $P=I$ implicit for referencing). The three differ in shape: referencing has a chain of depth~$3$ ($1\to2\to3$), the grammar a nested rule ($N_5\to N_1N_2$), and the bicliques share column sets among non-adjacent rows and with non-consecutive columns ($\{3,6\}\times\{5,8\}$).}
\label{tab:example}
\begin{tabular}{@{}p{1.6cm}p{4.2cm}p{6.0cm}r@{}}
\toprule
Family & Nodes & Residual $T'$ and connections & Size\\
\midrule
Referencing & $q=6$ rows. $r_2=1$, $r_3=2$, $r_1=r_4=r_5=r_6=0$ & $A'$: row 1 $\{1..4\}$; row 2 $\{+5\}$; row 3 $\{+8\}$; row 4 $\{3,4,7,8\}$; row 5 $\{1,2,7,8\}$; row 6 $\{5..8\}$ ($\nnz(A')=18$); $Q$ has $2$ entries & $20$\\
Grammar & $N_1\to 1\,2$; $N_2\to 3\,4$; $N_3\to 5\,6$; $N_4\to 7\,8$; $N_5\to N_1N_2$ & rows: $N_5$, $N_5\,5$, $N_5\,5\,8$, $N_2N_4$, $N_1N_4$, $N_3N_4$ ($P$: 9 entries; $T'$: 3; $T$: 8; $Q$: 2) & $22$\\
Bicliques & $(\{1,2,3\},\{1,2,3,4\})$, $(\{4,5\},\{7,8\})$ and
$(\{3,6\},\{5,8\})$ & residual edges $(2,5),(4,3),(4,4),(5,1),(5,2),$ $(6,6),(6,7)$ ($T'$: 7; $P$: 7; $T$: 8) & $22$\\
\bottomrule
\end{tabular}
\end{table}

Rows $1$--$3$ share the columns $1$--$4$, rows $4$--$5$ share the columns $7$--$8$, and the columns $5$ and $8$ are shared by rows $3$ and $6$, which are not adjacent; in row $6$ the columns $6$ and $7$ lie between them. \Cref{tab:example} shows the three factorizations.
The corresponding \cref{eq:factorization} for {\em Referencing} is:

{\setlength{\arraycolsep}{2.5pt}\small
\[
A \;=\;
\overbrace{
\left( I \,-\,
\smash[b]{\underbrace{\begin{pmatrix}
0&0&0&0&0&0\\
1&0&0&0&0&0\\
0&1&0&0&0&0\\
0&0&0&0&0&0\\
0&0&0&0&0&0\\
0&0&0&0&0&0
\end{pmatrix}}_{Q}}
\right)^{-1}
\smash[b]{\underbrace{\begin{pmatrix}
\hphantom{-}1&\hphantom{-}1&\hphantom{-}1&\hphantom{-}1&\hphantom{-}0&\hphantom{-}0&\hphantom{-}0&\hphantom{-}0\\
\hphantom{-}0&\hphantom{-}0&\hphantom{-}0&\hphantom{-}0&\hphantom{-}1&\hphantom{-}0&\hphantom{-}0&\hphantom{-}0\\
\hphantom{-}0&\hphantom{-}0&\hphantom{-}0&\hphantom{-}0&\hphantom{-}0&\hphantom{-}0&\hphantom{-}0&\hphantom{-}1\\
\hphantom{-}0&\hphantom{-}0&\hphantom{-}1&\hphantom{-}1&\hphantom{-}0&\hphantom{-}0&\hphantom{-}1&\hphantom{-}1\\
\hphantom{-}1&\hphantom{-}1&\hphantom{-}0&\hphantom{-}0&\hphantom{-}0&\hphantom{-}0&\hphantom{-}1&\hphantom{-}1\\
\hphantom{-}0&\hphantom{-}0&\hphantom{-}0&\hphantom{-}0&\hphantom{-}1&\hphantom{-}1&\hphantom{-}1&\hphantom{-}1
\end{pmatrix}}_{T}}
}^{F}
\]
}

\vspace{10mm}

Here the references form a chain, $1\to2\to3$, so the row $3$ is obtained by adding $8$ to the row $2$, which in turn is the row $1$ plus $5$; the other rows are stored explicitly (their reference is $\emptyset$). The rows $4$--$6$ could also reference one another at the same weight, but they would pay for a pointer.

The decomposition for {\em Grammar} is as follows. The rules $N_1,\dots,N_4$ are the runs of columns that are repeated in several rows, and $N_5\to N_1N_2$ is a rule that uses other rules, so $Q\ne0$: the fragment of $N_5$ is $F_5=F_1+F_2=(1,1,1,1,0,0,0,0)$, the columns $1$--$4$. Each row of $A$ is then a short sequence of nonterminals and terminals given by $P$ and $T'$; row $3$ needs the terminals $5$ and $8$, which are not consecutive in row $6$, so no rule can share them between the two rows. Rules can be nested, but they can share only runs of consecutive columns inside a row.

{\setlength{\arraycolsep}{2pt}\small
\[
A \;=\;
\smash[b]{\underbrace{\begin{pmatrix}
0&0&0&0&1\\
0&0&0&0&1\\
0&0&0&0&1\\
0&1&0&1&0\\
1&0&0&1&0\\
0&0&1&1&0
\end{pmatrix}}_{P}}
\overbrace{
\left( I \,-\,
\smash[b]{\underbrace{\begin{pmatrix}
0&0&0&0&0\\
0&0&0&0&0\\
0&0&0&0&0\\
0&0&0&0&0\\
1&1&0&0&0
\end{pmatrix}}_{Q}}
\right)^{-1}
\smash[b]{\underbrace{\begin{pmatrix}
1&1&0&0&0&0&0&0\\
0&0&1&1&0&0&0&0\\
0&0&0&0&1&1&0&0\\
0&0&0&0&0&0&1&1\\
0&0&0&0&0&0&0&0
\end{pmatrix}}_{T}}
}^{F}
\;+\;
\smash[b]{\underbrace{\begin{pmatrix}
0&0&0&0&0&0&0&0\\
0&0&0&0&1&0&0&0\\
0&0&0&0&1&0&0&1\\
0&0&0&0&0&0&0&0\\
0&0&0&0&0&0&0&0\\
0&0&0&0&0&0&0&0
\end{pmatrix}}_{T'}}
\]
}

\vspace{12mm}

Finally, the factorization by {\em Bicliques} has $Q=0$ (depth one), so the
factor $(I-Q)^{-1}$ is not needed. Its
nodes are complete bipartite subgraphs: the first one connects the rows
$1$--$3$ with the columns $1$--$4$, the second the rows $4$--$5$ with the
columns $7$--$8$ and the third the non-adjacent rows $3$ and $6$ with the
non-consecutive columns $5$ and $8$: in row $6$ the columns $6$ and $7$ lie between them, so a grammar cannot capture this sharing with a rule, whereas a biclique can. A row may belong to several bicliques (row $3$ belongs to two), and the column set of a biclique need not be consecutive.

{\setlength{\arraycolsep}{2.5pt}\small
\[
A \;=\;
\smash[b]{\underbrace{\begin{pmatrix}
1&0&0\\
1&0&0\\
1&0&1\\
0&1&0\\
0&1&0\\
0&0&1
\end{pmatrix}}_{P}}
\overbrace{
\smash[b]{\underbrace{\begin{pmatrix}
1&1&1&1&0&0&0&0\\
0&0&0&0&0&0&1&1\\
0&0&0&0&1&0&0&1
\end{pmatrix}}_{T}}
}^{F}
\;+\;
\smash[b]{\underbrace{\begin{pmatrix}
0&0&0&0&0&0&0&0\\
0&0&0&0&1&0&0&0\\
0&0&0&0&0&0&0&0\\
0&0&1&1&0&0&0&0\\
1&1&0&0&0&0&0&0\\
0&0&0&0&0&1&1&0
\end{pmatrix}}_{T'}}
\]
}

\vspace{15mm}

The three factorizations have sizes $20$, $22$ and $22$, all smaller than $\nnz(A)=27$ (they have that many nonzeros in their matrices), and evaluating $A\xx$ with any of them takes that many multiplications/additions instead of $27$. For the biclique factorization, for example, this means $c_1=x_1+x_2+x_3+x_4$, $c_2=x_7+x_8$ and $c_3=x_5+x_8$ are computed once and
added to $y_1,y_2,y_3$, to $y_4,y_5$, and to $y_3,y_6$, respectively.

\paragraph{How the families differ.}
Seen this way, the three families differ in \emph{how the factorization is found}, and the compression ratio they
obtain is the sparsity ratio $\nnz(A)/s$. Referencing uses a forest on the rows and is the most constrained
(each node is a row); grammars discover shared \emph{consecutive subsequences} and allow arbitrarily deep sharing;
bicliques share both a set of rows and a set of columns simultaneously, but have depth one.
Finding the smallest grammar is NP-hard~\cite{charikar2005smallest,storer1982data}, so practical compressors are
heuristics (\cref{sec:grammar}), and biclique extraction is also heuristic~\cite{feder1995clique,hernandez2014compressed};
for referencing, the optimal choice is a minimum spanning tree problem~\cite{bookstein1991compression,gagie2026set}
when the reference may be any row, and a more constrained problem when it must lie in a window~\cite{boldi2004webgraph} (\cref{sec:setdiff}).

\paragraph{What the view does not cover.}
The view concerns products by a vector and, as seen in \cref{sec:factprod}, sums and products of factorized matrices. The latter produce factorizations whose size we cannot bound, so they must be recompressed, and other operations whose output is again a matrix, such as the transitive closure, are not covered. The view
also does not cover the $k^2$-tree family, which follows a different principle: instead of factoring the matrix, it \emph{partitions}
the matrix recursively into quadrants and discards the empty ones; those formats
support a much wider set of algebraic operations (\cref{sec:quadtree}). The
distinction is useful: factorization-based formats work better for specific
operations like products, and partition-based ones support a more general algebra.

\subsection{Products of factorized matrices}
\label{sec:factprod}

The view of \cref{sec:factor-vector} concerns products by a vector, but it also gives matrix--matrix products whose output is again a factorization of the same form. As far as we know the surveyed papers do not state this in general: the compositional biclique algorithm of \cref{sec:bicliquemult} is its depth-one case. The more general derivation that follows is ours.

Let $A=P_AF_A+T'_A\in\mathbb{K}^{n\times m}$ and $B=P_BF_B+T'_B\in\mathbb{K}^{m\times p}$ be sparse straight-line factorizations with $q_A$ and $q_B$ nodes, with fragments $F_A=(I-Q_A)^{-1}T_A$ and $F_B=(I-Q_B)^{-1}T_B$. The fragments of $A$ are vectors over the $m$ column positions of $A$, which are the row positions of $B$, and $P_B$ says which fragments of $B$ make up each of those rows. The matrix $T_AP_B\in\mathbb{K}^{q_A\times q_B}$ therefore relates the two structures through the shared dimension $m$.

\begin{theorem}\label{prop:product}
Let $(X\,;Y)$ denotes the horizontal concatenation of the matrices $X$ and $Y$, and
\[
Q_C=\begin{pmatrix}Q_B&0\\ T_A P_B&Q_A\end{pmatrix},\quad
P_C=\bigl(\,T'_AP_B\;;\;P_A\,\bigr),\quad
T_C=\begin{pmatrix}T_B\\ T_AT'_B\end{pmatrix},\quad
T'_C=T'_AT'_B .
\]
Then $AB=P_C\,(I-Q_C)^{-1}\,T_C+T'_C$ is a sparse straight-line factorization with $q_A+q_B$ nodes (those of $B$ first; $Q_C$ is strictly lower triangular), whose fragments are those of $B$ followed by $F_AB$.
\end{theorem}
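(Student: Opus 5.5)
The plan is to verify the identity through the fragment recursion $F=T+QF$ rather than by inverting block matrices. Set $F_C=(I-Q_C)^{-1}T_C$. Equivalently, $F_C$ is the unique solution of $F_C=T_C+Q_CF_C$; uniqueness holds because $Q_C$ is strictly lower triangular, so $I-Q_C$ is unit lower triangular and invertible. I will guess the solution, check that it satisfies this fixed-point equation, and then evaluate $P_CF_C+T'_C$.

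First I check the structural claims. $Q_B$ and $Q_A$ are strictly lower triangular, and the off-diagonal block $T_AP_B$ sits below the diagonal. Hence $Q_C$ is strictly lower triangular of order $q_B+q_A$. The dimensions are consistent: $T_A$ is $q_A\times m$ and $P_B$ is $m\times q_B$, so $T_AP_B$ is $q_A\times q_B$; likewise $T_AT'_B$ is $q_A\times p$, $T'_AP_B$ is $n\times q_B$, and $T'_AT'_B$ is $n\times p$.

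Next, the candidate is the vertical stacking $F_C=\begin{pmatrix}F_B\\ F_AB\end{pmatrix}$, as the statement announces.
\begin{itemize}
\item \emph{Top block.} The equation reads $F_B=T_B+Q_BF_B$, which is the definition of $F_B$.
\item \emph{Bottom block.} The equation reads $F_AB=T_AT'_B+T_AP_BF_B+Q_AF_AB$. Using $B=P_BF_B+T'_B$, the first two terms on the right combine to $T_AB$. The claim then becomes $F_AB=(T_A+Q_AF_A)B$, which is $F_A=T_A+Q_AF_A$ multiplied on the right by $B$.
\end{itemize}
By uniqueness, $F_C=(I-Q_C)^{-1}T_C$.

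Finally, I compute
\[
P_CF_C+T'_C=T'_AP_BF_B+P_AF_AB+T'_AT'_B=T'_A(P_BF_B+T'_B)+P_AF_AB .
\]
This equals $T'_AB+P_AF_AB=(P_AF_A+T'_A)B=AB$.

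The only delicate step is the bottom-block check, where the factorization of $B$ has to be substituted in the right place; the rest is bookkeeping. The argument uses only associativity and distributivity of matrix products, so it also holds over a commutative semiring if $(I-Q)^{-1}$ is read as $\sum_iQ^i$, as in footnote~\ref{fn:semiring}. Sparsity of the new factors, and hence the size of the result, is not part of the claim and need not be bounded here.
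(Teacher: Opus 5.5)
Your proof is correct and follows essentially the same route as the paper's. Both work blockwise with the fixed-point equation $F_C=Q_CF_C+T_C$ to identify the two blocks as $F_B$ and $F_AB$; the paper solves the block equations forward, whereas you verify a guessed solution and invoke uniqueness. Both then finish with the identical computation $P_CF_C+T'_C=T'_AB+P_AF_AB=AB$.
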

\begin{proof}
Let $F_C=(I-Q_C)^{-1}T_C$ and split it as $\binom{F'}{F''}$ according to the
blocks. Since $F_C=Q_CF_C+T_C$, the first block gives $F'=Q_BF'+T_B$, so $F'=F_B$. The second gives
$F''=T_A P_B F'+Q_AF''+T_AT'_B$, so $F''=(I-Q_A)^{-1}T_A(P_BF_B+T'_B)=F_AB$. Then
$P_CF_C+T'_C=T'_AP_BF_B+P_AF_AB+T'_AT'_B=T'_AB+P_AF_AB=AB$.
\end{proof}

The nodes of $A$ now hold the products $F_{k\ast}B$ of its shared fragments by
$B$, each computed once, and every row of $AB$ is a combination of them and of
the fragments of $B$, plus the residual $T'_AB$. The only new matrices are the
four sparse products $T_AP_B$, $T'_AP_B$, $T_AT'_B$ and $T'_AT'_B$ between
\emph{pieces} of the factorizations, which can be computed with any sparse
product algorithm (for instance Schoor's algorithm \cite{schoor1982fast}, whose cost is dominated by the number of elementary products $\sum_{t=1}^m c_tr_t$, where $c_t$ and $r_t$ are the numbers of nonzeros in position $t$ of the left and right pieces). The time thus depends on the compressed pieces, and not on the nonzeros of $A$ and $B$, and the size of the result is
$\nnz(P_A)+\nnz(T'_AP_B)+\nnz(Q_A)+\nnz(Q_B)+\nnz(T_AP_B)+\nnz(T_B)+\nnz(T_AT'_B)+\nnz(T'_AT'_B)$.

\paragraph{Matrix sums.} Sums can also be supported: for $C=A+B$ (same shape) take 
\[ Q_C=\begin{pmatrix}Q_A&0\\0&Q_B\end{pmatrix}, \quad P_C=(P_A\;;\;P_B), \quad
T_C=\binom{T_A}{T_B}, \quad T_C'=T'_A+T'_B,\]
with $q_A+q_B$ nodes.
The same identities hold over any commutative semiring, with the caveats of
footnote \ref{fn:semiring}. While this is computed in time proportional to the compressed sizes, the
representation obtained may not be optimal and may benefit from recompression, just like the proposed
summation algorithm for bicliques (\cref{sec:bicliquemult}).

\paragraph{Pointwise and Kronecker products.}
The Hadamard (element-wise) product $A\circ B$ is, in the Boolean case, the intersection, which $k^2$-trees already support (\cref{sec:algebra}). In the factorization view it is less direct than the sum: row $i$ of $A\circ B$ is $\sum_{k,l}P^A_{ik}P^B_{il}\,(F^A_k\circ F^B_l)$ plus residual terms, so a fragment is needed for every pair of nodes used together by some row, and we do not see a bound on the size of the result better than that of the explicit matrix.
The Kronecker product is, in contrast, simple over a ring (our own observation). Since $A\otimes B=(P_A\otimes P_B)(F_A\otimes F_B)$ and $F_A\otimes F_B=\big((I-Q_A)\otimes(I-Q_B)\big)^{-1}(T_A\otimes T_B)$, a factorization of $A\otimes B$ has one node per pair $(k,l)$, with $P=P_A\otimes P_B$, $T=T_A\otimes T_B$ and $Q=Q_A\otimes I+I\otimes Q_B-Q_A\otimes Q_B$ (strictly lower triangular in lexicographic order of the pairs); to simplify we took $T'_A=T'_B=0$, which can always be assumed by regarding each residual row as one more node. Its size is $\Oh(s_As_B)$ for factorizations of sizes $s_A$ and $s_B$, so, as $\nnz(A\otimes B)=\nnz(A)\nnz(B)$, the compression ratio of the Kronecker product is the product of the ratios of its factors.

\paragraph{Special cases.}
\begin{itemize}
\item \emph{Only the left factor is compressed} ($q_B=0$, so $B$ is explicit): $AB=P_A(I-Q_A)^{-1}(T_AB)+T'_AB$. The structure $(P_A,Q_A)$ is untouched and only $T_A$ and $T'_A$ are multiplied by $B$. For referencing ($P_A=I$, $T'_A=0$, $T_A=A'$) this is $AB=(I-Q_A)^{-1}(A'B)$: the \emph{same reference forest} with the residual $A'B$. When $B$ is a dense matrix, as in the GNNs mentioned in \cref{sec:reference}, these are the two stages of \citet{alves2025accelerating} (a sparse--dense product with $A'$, followed by the substitution along the forest); when $B$ is sparse, the substitution can be skipped and the output left in compressed form.
\item \emph{Only the right factor is compressed} ($q_A=0$): $AB=(AP_B)F_B+AT'_B$, which keeps $(Q_B,T_B)$ and multiplies $A$ by $P_B$ and $T'_B$.
\item \emph{Bicliques} ($Q_A=Q_B=0$, with sets $(S^A_k,T^A_k)$ and $(S^B_l,T^B_l)$): $N_{kl}=|T^A_k\cap S^B_l|$ counts how many positions connect the biclique $k$ of $A$ with the biclique $l$ of $B$, and in the Boolean case $N_{kl}>0$ is all that matters. The four terms of \cref{eq:bicmat} are the four terms of the product, and the unions of column sets that the algorithm of \cref{sec:bicliquemult} materializes for every biclique of $A$ are the fragments $F_AB$; the table $\mathit{marks}$ of that algorithm is a row-wise representation of $P_B$, and looking up in it the elements of each $T^A_k$ computes $T_AP_B$. Materializing the fragments gives again a biclique representation ($Q_C=0$), as in that paper \cite{araya2026compressed}; keeping the nonzeros of $T_AP_B$ as edges inside $Q_C$, instead, is what our generic algorithm does and saves space, but the result is not anymore a biclique.
\end{itemize}

A dual combination for referencing, in which $A$ is used through its row structure and $B$ through the structure of its columns, underlies \cref{eq:matmat}: if $W'$ is the residual matrix of $B\T$ (the matrix whose rows are the differences with its references), then $AB=(I-Q_A)^{-1}(A'\,{W'}\T)(I-Q_B)^{-\top}$. Its output is a two-sided factorization, which does not have the form of \cref{eq:factorization}. The Boolean product on compressed binary trees of \cref{sec:refmatmat} also uses the rows of $A$ and the columns of $B$, and produces its output by columns, so that it can be the right operand of the next product.

\paragraph{What it does not give.}
We do not have a bound on the size of the result. The matrix $T_AP_B$ can have up to $q_Aq_B$ nonzeros, the other three products can have fill-in, and nothing prevents the output from being larger than a factorization of $AB$ found from scratch (or even than $\nnz(AB)$). It is therefore a valid but not necessarily good factorization, which can be improved by \emph{recompression}: discarding the nodes that are no longer reached by any row, merging nodes with identical fragments, or running again the search for a good factorization on $AB$, as in the biclique algorithms. The experiments of \citet{araya2026compressed} show both sides: skipping the recompression makes the product up to $20\times$ faster, but the space of the output degrades by $7\%$--$40\%$ in most cases.
In iterated products the effect compounds: squaring doubles the number of nodes ($q_C=2q_A$ for $C=A^2$), so recompression is needed to stay compressed. Finally, when only products by vectors are needed one should usually avoid forming $AB$ at all, since $(AB)\xx=A(B\xx)$ costs the sizes of the two factorizations, which is typically less than that of $AB$; forming the product pays off when it is reused (powers, closures) or needed in explicit form.

\section{Comparison and Guidelines}
\label{sec:comparison}

\Cref{tab:compare} compares the families along the dimensions that matter when choosing a format. The general
tradeoff is that, the more compression, the less computation performance. Still, compressed computation is 
viable overall: some formats achieve space and time reduction at the same time.

\begin{table}[t]
\centering\small
\setlength{\tabcolsep}{3.5pt}
\caption{Qualitative comparison of the families of computation-friendly formats. \checkmark: supported and evaluated; (\checkmark): doable and promising, but not evaluated; ---: not studied. 
Compression and speed are coarse ratings relative to the plain (CSR or dense) representation (more $+$ is better, $-$ is worse); speed refers to the product that each family supports best (by a vector for referencing and grammars, matrix--matrix for bicliques and $k^2$-trees), and both depend strongly on the matrix.}
\label{tab:compare}
\begin{tabular}{@{}p{2.1cm}ccccccrr@{}}
\toprule
Family & \makecell[c]{Right\\product} & \makecell[c]{Left\\product} & \makecell[c]{Mat\\$\times$mat} & \makecell[c]{Sum \&\\ more} & \makecell[c]{Real\\values} & \makecell[c]{Random\\access} & Compression & $~~~$Speed\\
\midrule
Referencing & \checkmark & (\checkmark) & (\checkmark) & --- & (\checkmark) & (\checkmark) & ++++ & + \\% Window heuristic (BV) or MST \\%& Needs good row order; social graphs\\
Grammar & \checkmark & \checkmark & --- & --- & \checkmark & (\checkmark) & ++ & ++ \\%& RePair; column reordering \\%& Dense / unstructured data; compression time\\
Biclique & \checkmark & (\checkmark) & \checkmark\ & (\checkmark) & --- & --- & + & +++ \\% Heuristic extraction (adaptive version faster) \\%& Extraction cost; research prototypes\\
$k^2$-tree & (\checkmark) & (\checkmark) & \checkmark & \checkmark & (\checkmark) & \checkmark & +++ & - \\%& Fast ($\approx1\,\mu$s/edge) \\%& $\log v$ overhead; Boolean only\\
\bottomrule
\end{tabular}
\end{table}

\subsection{Cross-cutting observations}

\paragraph{The gain follows the compression ratio.}
In all the experiments on products by vector, the speedup is a fraction of the reduction in the number of symbols processed: the compression ratio of
\cref{fig:ref,fig:gram-scatter,fig:bic-pr} upper-bounds the speedup, because the compressed traversal has worse locality (the nonzeros of $A'$ are scattered in the referencing format, nonterminals and bicliques require random writes to $\yy$) or involves
extra work (decoding, pointer chasing, rank operations). The cache-friendly deployment of kache-trees~\cite{farina2025cache} makes this explicit: with the same asymptotic work, fewer cache misses give up to $60\%$
speedups. A compressed format is therefore most useful for compressed computation if its structure is also hardware-friendly; 
cache-obliviousness and vectorization are important topics for future work.
On the other hand, matrix--matrix products may benefit superlinearly from convenient compression formats: the biclique representation
(\cref{sec:biclique}) obtains speedups of $34\times$--$289\times$, which exceed the space reduction ($3\times$--$8\times$) by a large margin.

\paragraph{The matrix decides the family.}
Web graphs ordered by URL are ideal for referencing; social networks are not, but bicliques and $k^2$-trees perform reasonably
on them; machine-learning matrices with few distinct values per column (categorical features, one-hot encodings, quantized values) are the
natural domain of grammars and of CLA; and clustered Boolean relations (knowledge graphs, RDF) suit the $k^2$-tree. In matrices with little
structure (like \texttt{Susy}, where more than $98\%$ of the entries are nonzero and there are over $20$ million distinct values) none of the formats
helps, and the dense or the plain sparse representation is the right choice.

\paragraph{Preprocessing is the price.}
All the formats are static, and the cost of building them ranges from $1.1\,\mu$s per edge for $k^2$-trees~\cite{arroyuelo2025evaluating} and a few minutes for the largest ML matrices with RePair, to hours for the recompression of the
largest Web graphs for referencing~\cite{francisco2022graph} and to heavy MST computations for the best references~\cite{gagie2026set}. Since the intended use is to multiply the same matrix many times (PageRank iterations,
iterative solvers), the amortization is generally acceptable, but it means that the formats are not meant for matrices that are operated once or change often.

\paragraph{Compositionality and recompression.}
Because the product of two factorizations is again one (\cref{sec:factprod}), the families can be chained, but the output size is not bounded and grows with every product. Skipping recompression made the biclique product up to $20\times$ faster at a cost of $7\%$--$40\%$ in space~\cite{araya2026compressed}; whether to recompress is a trade-off between the cost of the next operation and that of the extraction.

\paragraph{Working space.}
The auxiliary space is small but not null: grammar-based products need $\Oh(|R|)$ words, referencing needs the $n$ partial results, and algebraic operations produce outputs that may be much larger than the inputs (the transitive closure being the extreme).
The \emph{peak memory} of the grammar-based product is a small multiple of the compressed size (Table~5 of \citet{ferragina2022improving}), and the benefits in time are largest for matrices that compress well (\cref{fig:gram-both,fig:gram-scatter}).

\paragraph{Parallelism.}
Row-block partitioning is the natural way to parallelize the product for all the families (each block is compressed and multiplied independently), at the price of losing
inter-block redundancy: the grammar results showed that sometimes the split even \emph{improves} compression because blocks have little in common. The recursive structure of $k^2$-trees gives
parallelism in the algebra, but thread-creation overhead cancels the gain on very sparse inputs.

\subsection{Which format when?}

As a practical summary of the evidence in the surveyed papers:
\begin{itemize}
\item \emph{Large Web or host graph, many products with the same matrix} (PageRank and similar power-method computations):
  reference-based compression via WebGraph with the residual trick of \cref{alg:ref}; expect up to $2$--$3\times$ speedup on page graphs; consider bicliques when the graph is less
  copy-friendly (the biclique format reached $2.4\times$ on \texttt{it-2004}).
\item \emph{Real matrix with few distinct values, large and tall} (ML design matrices): grammar compression with column reordering, row blocks and 16 or more threads; expect compression within about 50\% of \texttt{xz} and sometimes smaller, smaller peak memory than CLA, and speed comparable to or better than dense when the compression ratio is high.
\item \emph{Boolean matrices that must be composed} (graph databases): $k^2$-trees (smallest), their kache or prejoin variants (faster) or a CSR/CSC baseline (about $4\times$ the space of the $k^2$-tree and several times faster; space drops to $2\times$ if storing just CSR \cite{araya2026compressed}). For products of large Web graphs, bicliques with Schoor's algorithm (up to $34\times$--$289\times$ faster than the baseline depending on how the bicliques are used).
\item \emph{Moderately compressible matrices, tight memory} : combine a computation-friendly format with a lightweight entropy coder, accepting a decoding cost (\texttt{re\_iv}, \texttt{re\_ans}).
\end{itemize}

In the terms of \cref{sec:factorization}, the choice depends on which matrices have small factorizations of each shape, and on how much it costs to find them.

\Cref{tab:code} lists the implementations of the surveyed works that we found.

\begin{table}[t]
\centering\small
\caption{Public implementations of the works discussed.} % ``n/g'': the paper does not give a URL.}
\label{tab:code}
\begin{tabular}{@{}p{2.6cm}p{2.8cm}p{7.3cm}@{}}
\toprule
Work & Family & Code\\
\midrule
\citet{boldi2004webgraph} & Referencing (build) & \url{https://github.com/vigna/webgraph/} \\%\url{https://law.di.unimi.it}\\
\citet{alves2025accelerating} & Referencing (CBM) & \url{https://github.com/cbm4scale}\\
\citet{ferragina2022improving} & Grammars & \url{https://gitlab.com/manzai/mm-repair}\\% (matrices on Kaggle)\\
\citet{francisco2022graph} & Bicliques & \url{https://github.com/koeppl/matrixbiclique}\\
\citet{araya2026compressed} & Bicliques (mat-mat) & \url{https://github.com/nicolasaraya/BicliqueMultiplication}\\
\citet{arroyuelo2025evaluating} & $k^2$-tree algebra & \url{https://github.com/adriangbrandon/rpq-matrix}\\
\citet{tosoni2025toward} & Various & \url{https://gitlab.com/ftosoni/green-lossless-spmv}\\
\bottomrule
\end{tabular}
\end{table}

\subsection{Relation to theory}

Every method in \cref{sec:reference,sec:grammar,sec:biclique,sec:quadtree} has time $\Theta(\text{size of the representation})$ to compute $A\xx$: the product reads each symbol once. Therefore the possible
speedup is exactly the sparsity-relative compression ratio, and the interesting questions are about \emph{which matrices admit small representations}. The formal guarantees available so far connect size to
combinatorial measures of the input: the symmetric-difference compressibility $\Delta(\mathcal{S})$ for referencing~\cite{gagie2026set}, the size of the smallest grammar that generates the CSRV sequence for grammars~\cite{ferragina2022improving,charikar2005smallest} (which practical compressors approximate only heuristically), the smallest biclique partition \cite{krapivin2026optimal} (which again, implementations approach heuristically with small biclique covers) and, for $k^2$-trees, bounds in terms of $\nnz(A)$ and of clustering~\cite{debernardo2023faster}. There is no common measure; a unification of these notions into a notion of ``computation-friendly compressibility'' is
an open challenge.

\section{Open Problems and Research Directions}
\label{sec:open}

The surveyed papers conclude with a number of future-work items, and the comparison of \cref{sec:comparison} reveals further gaps. We group them into five themes: supported algebra, compression methods, theoretical results, applications, and computation models.

\subsection{Boolean and other algebras}

The Boolean algebra of \cref{sec:quadtree} is compositional but compresses only by spatial clustering; it would be interesting to extend those formats so as to capture row similarity or repeated patterns while preserving functionality. Two-dimensional block trees and depth-first $k^2$-trees (\cref{sec:k2variants}) are first steps in this direction, as they share repeated submatrices inside a $k^2$-tree. However, for the former it is not clear that the algorithms of \cref{sec:quadtree} keep their time bounds, and for the latter the variant that shares submatrices, which compresses best, was the slowest in the products evaluated by \citet{carmona2025depth}.

From the other formats, which capture repetitions, bicliques are the family that most clearly supports Boolean matrix--matrix 
products and sums (\cref{sec:bicliquemult}), and it is compositional. This technique is generalized in \cref{prop:product},
which holds on rings and may yield new ways to address this operation. A problem, in both cases, is how to maintain the quality of 
the compression along the composition of operations; we further discuss this issue in \cref{sec:open-construction}.
\cref{eq:matmat} gives another initial idea to multiply with references; its evaluation and analysis 
is open. Other pointwise matrix operations, like Hadamard products, may not be so easy to perform on these formats.

As for transitive closures, the algorithm of \cref{sec:algebra} converts any solution for matrix products into one
for transitive closure, for recursive decompositions. The improved multiplication results on kache-trees have 
already been seen to carry over transitive closures.\footnote{Personal communication with the authors.} This
operation is still missing on all the other formats.
Other operations that are not addressed in the surveyed papers include the sparse--sparse product with a real semiring such as $(\mathbb{R},\min,+)$ for shortest paths, and matrix powers $A^k$ applied to vectors (combining
repeated matrix--vector products with iterated squaring, for example).
Over a commutative semiring, the products of \cref{thm:fact} can be evaluated with semiring operations alone (writing $\sum_{i<q}Q^i$ for $(I-Q)^{-1}$), so grammar-based formats extend to any semiring in which the values to be multiplied are the semiring elements of $T$. Referencing formats are more delicate: when a row deletes elements of its reference they need additive inverses (a ring), and over semirings without them, such as $(\mathbb{R},\min,+)$ for shortest paths or the Boolean semiring, only references that insert elements (insertion compressibility $I(\mathcal{S})$) are available. Bicliques, in turn, may overlap in an idempotent semiring (e.g., Boolean) but must partition the edges over $\R$. Which compression ratios these restricted forms achieve on matrices of interest, and whether closures over such semirings (all-pairs shortest paths, reachability) can be accelerated, are open questions.

\subsection{Compression and recompression}
\label{sec:open-construction}

Computing good references, good grammars and good bicliques is expensive. For referencing, the incremental MST of \citet{gagie2026set} is a step towards practicality, and can be complemented with adding Steiner nodes and with sketching and approximate nearest-neighbor techniques to find references (\cref{sec:setdiff}). For grammars, row permutations and co-clustering are
mentioned as future work in~\cite{ferragina2022improving}, together with different compressors per block; for bicliques, better extraction techniques may yield improved compression. Streaming or distributed
construction, and construction with explicit space/time trade-offs (for example, spending more preprocessing to get a higher compression ratio) are largely unexplored. 

\paragraph{Recompression along the algebra.}
Sums and products of factorized matrices (\cref{sec:factprod}) give valid but not necessarily small factorizations, and the problem compounds in chains of products, powers, and closures. A general problem is then to keep the quality of the compression along the operations of the algebra: either with recompression procedures cheap enough to be applied after each operation, which trade the cost of the next operation against that of the recompression (as the experiments with bicliques show~\cite{araya2026compressed}), or with formats that stay compact under the operations without extra work, as the $k^2$-tree does for the Boolean algebra (\cref{sec:algebra}). 

\paragraph{Dynamic matrices.}
All formats here are static. The Boolean-algebra paper identifies dynamic $k^2$-trees~\cite{arroyuelo2019faster} as a way to support updates; for referencing and grammar formats, updates could break the references or the rules, and
incremental maintenance is not known to us. Applications such as evolving graphs (where PageRank is recomputed periodically) would benefit. For time-evolving graphs, the differential compressed binary trees of \cref{sec:refmatmat} have been extended to support the insertion and deletion of arcs and of whole time frames directly on the compressed structure, with algorithms for earliest-arrival paths and for transitive closure over it~\cite{nelson2019thesis}; whether the speedups for products by a vector and by a matrix survive such updates is, as far as we know, unexplored.

\subsection{Theory}

The most important opening of the view of \cref{sec:factorization} is theoretical. The three factorization-based families fit a single scheme, whose size is a measure of the matrix, a kind of ``straight-line complexity'' of the product by a vector, analogous to the size of the smallest grammar of a string. What is the minimum size of such a factorization of a given matrix, how does it extend 
the measures of each family ($\Delta(\mathcal{S})$, the size of the smallest grammar of the CSRV sequence, number of residual edges), and is it approximable? Which classes of compressed representations admit products in time linear in their size, and which do not, in the spirit of~\citet{abboud2020impossibility}? 

In these terms, recompression has a simple abstract form: given a factorization $A = P(I-Q)^{-1}T+T'$ of a matrix, find a smaller one for the same matrix. Two steps are cheap and exact in all the families (\cref{sec:factprod}): discarding the nodes that no row reaches, and merging nodes with identical fragments. A third step is the expensive heuristic: running again the search for a good factorization (a reference search, a grammar compressor, or a biclique extractor) on the result or on the explicit matrix. Characterizing the factorizations that cannot be reduced by the first two steps, and how far they are from the minimum one, would be a first step towards a theory of recompression, and would connect the previous question with the construction problem above.

\subsection{Applications}

The use of these techniques in applications would not only benefit them, but also bring new challenges to the area.

\paragraph{Interaction with lossy compression: ML and LLM weights.}
Lossless grammar compression gave good results on classical machine-learning matrices; the open question is its effect on modern, large and dense
weight matrices of neural networks, which are mostly not sparse and have few exact repetitions, except after quantization, where the value domain is small and the CSRV model fits well.
Combining lossy quantization with lossless computation-friendly formats~\cite{ferragina2022improving,han2016deep} is a natural but unexplored experiment at scale. In GNNs, the compression of the adjacency matrix with the
set-difference formats has been used to accelerate the training~\cite{alves2025accelerating}.

\paragraph{Semirings and database workloads.}
\citet{ferragina2022improving}
mention that grammar compression could handle semiring-annotated data, simulating joins on logical matrices ($+$ replaced by OR and $\times$ by AND), and columnar databases. The $k^2$-trees and prejoining~\cite{arroyuelo2022optimal,arroyuelo2025evaluating,arroyuelo2026boosting} tie matrix
multiplication with worst-case-optimal joins and regular path queries on graph databases, and could be a vehicle to bring compressed matrices to the optimizer of a relational or graph database engine.

\paragraph{Integration in libraries.}
The implementations are research prototypes. Making them available as formats in sparse-algebra frameworks (GraphBLAS-like interfaces, tensor compilers~\cite{kepner2016mathematical,chou2018format,donenfeld2022unified}) would make them more widely available for
practitioners. Since the surveyed papers use different datasets, hardware and baselines, a common benchmark suite would also make the families comparable.

\subsection{Computation models}

While reducing space gives opportunities to execute on higher and faster levels
of the memory hierarchy, poor locality of reference may ruin those advantages.
Kache-trees (\cref{sec:kache}) showed that layout matters as much as asymptotics; a fully cache-oblivious variant, in which the number of I/Os matches the best known for sparse matrices~\cite{dusefante2018cache,gleinig2022io}, is announced as future work~\cite{farina2025cache}. The
irregular memory access of compressed formats is a challenge for GPUs, where
the uncompressed formats (ELL, HYB) were designed around
regularity~\cite{bell2009implementing}; GPU-oriented decode-first formats are reviewed in \cref{sec:decompress}. Interesting results in this respect
are reported by \citet{tosoni2026right}, who port the right product on 
grammar-compressed matrices to GPUs: by layering the grammar (inserting 
pass-through nodes so that each level reads only the previous one) the device 
footprint is $4$--$8\times$ smaller than a CSR baseline on genotype matrices, 
with times within a small factor of cuSPARSE and lower energy. 
Distributed-memory versions would require
partitioning compressed structures and communicating partial results efficiently.

\section{Conclusions}
\label{sec:conclusions}

We have surveyed computation-friendly formats for sparse matrices: representations that are smaller than the plain sparse format and allow matrix--vector products, and in some cases matrix--matrix products, sums or closures, to be computed in time
proportional to their size. The four families we identified exploit different kinds of redundancy (spatial clustering, similar rows, repeated subsequences, dense bipartite blocks).

The main conceptual message of the survey is that three of these families, referencing, grammars and bicliques, are \emph{one} idea in three appearances: each finds a sparse straight-line factorization $A=P(I-Q)^{-1}T+T'$ of the matrix,
and the same forward substitution evaluates the right product $A\xx$ and its transposed backward substitution evaluates the left product $\yy\T A$, in time proportional to the number of nonzeros of the factors (\cref{sec:factorization}). Everything that
distinguishes the families, apart from the matrices they suit, is how the factorization is found: a minimum spanning tree over symmetric differences, a grammar compressor, or an extractor of dense bipartite subgraphs.
This view turns the question of which format to use into the question of which matrices admit small factorizations of which shape, and at what cost. The product of two factorizations is again a factorization (\cref{sec:factprod}), although of unbounded size, so a compositional algebra with recompression is the natural next step. This view also suggests that a theory of the minimum size of such factorizations is the missing foundation of the area. The $k^2$-tree family stands apart, built on recursive partitioning rather than on factorization, and it
offers a complete algebra of Boolean matrices, but obtains the least speedups over a plain sparse representation

The empirical results show that the approach is practical. Speedups of $2$--$3\times$ on Web graphs with referencing, of up to $2.4\times$ for products by a vector with bicliques, up to $34\times$ (up to $289\times$ when the format is kept) for Boolean matrix--matrix products with bicliques, running times below those of the dense product on the most compressible
machine-learning matrices with grammars, and speedups of $2$--$13\times$ for chains of Boolean matrix products over qdags, come with space reductions that range, depending on the family and on the data, from modest to an order of magnitude. At the same time, the speedups remain below the compression ratio, preprocessing is costly and the formats are static, and the families have been evaluated
in isolation. The most valuable next practice-oriented step for the field may be a common benchmark and a common framework, in which the formats can be compared, combined and extended to new operations and hardware.

\bibliographystyle{ACM-Reference-Format}
\bibliography{refs}

\appendix
\section{Formats that Require Decompression: A Brief Tour}
\label{sec:decompress}

For completeness, we briefly review ``decode-first'' matrix compression formats, in which the matrix must be decoded, wholly or in
blocks, before (or while) it is operated upon, so that the cost of the operation does not shrink with the size of the
compressed representation, that is, operation time is always $\Omega(\nnz(A))$. This is not the focus of the survey and the review is deliberately short, but it is useful in
order to place the computation-friendly formats in context. The border is not sharp: the formats range from those that
decompress everything into memory, to those that decode small pieces on the fly inside the multiplication loop, to
those that operate directly on compressed pieces (the latter being the subject of the main body of this survey).
\Cref{tab:decode} summarizes the classes.

\begin{table}[t]
\centering\small
\caption{Decode-first classes of formats. ``Lossy'' means that the matrix itself is modified.}
\label{tab:decode}
\begin{tabular}{@{}p{3.0cm}p{4.3cm}p{2.5cm}p{2.3cm}@{}}
\toprule
Class & Examples & What must be decoded & Gain\\
\midrule
General-purpose & \texttt{gzip}, \texttt{bzip2}, \texttt{xz}, \texttt{zstd} & Whole block / & Disk, network\\
 &  & matrix & \\
Floating-point & fpzip, ZFP, Gorilla & Blocks of values & Disk, memory\\
Index/value coding & CSR-DU, CSX, DCSR; & Streams while & Memory \\
 & Elias--Fano; PFOR & multiplying & bandwidth \\
Blocked / structured & BCSR, VBR, CSB, DCSC, & Nothing (extra & Fewer indices; \\
 & ELL, DIA, HYB & zeros stored) & SIMD/GPU\\
Lightweight columnar & RLE, dictionary (CLA, AWARE) & Mostly operate on & Space; time\\
 & & compressed data & \\
Lossy (ML) & Pruning, quantization, low rank & Weights, at & Space; time\\
 & & inference & \\
\bottomrule
\end{tabular}
\end{table}

\subsection{General-purpose and floating-point compressors}

General-purpose compressors based on Lempel--Ziv~\cite{ziv1977universal} (\texttt{gzip}/DEFLATE~\cite{deutsch1996deflate}, LZMA as in \texttt{xz}),
on the Burrows--Wheeler transform~\cite{burrows1994block} (\texttt{bzip2}), or on a mixture of LZ and entropy coding
(\texttt{zstd}~\cite{collet2021zstandard}) are often excellent in compression ratio, and they are the natural baseline against which matrix compressors are
compared~\cite{ferragina2022improving}; however, they ignore the two-dimensional structure and force the whole matrix (or block)
to be decompressed before any operation. In the experiments of \citet{ferragina2022improving}, a multithreaded
\texttt{gzip}-based product that decompressed each block at every iteration had the best memory footprint, together with the grammar-based
formats, but was by far the slowest, in some cases more than $40\times$ slower. Moreover, grammar compressors (and thus the Lempel--Ziv family, whose size is
within a logarithmic factor of grammars) are the class covered by the impossibility results of \cref{sec:limits} when
they are applied to the plain sequence of values.

For floating-point data, specialized lossless compressors exploit the
structure of the IEEE representation: fpzip~\cite{lindstrom2006fast} predicts values and encodes residuals; Gorilla~\cite{pelkonen2015gorilla} XORs consecutive
values, which suits time series. ZFP~\cite{lindstrom2014fixed} compresses small blocks of a multidimensional array (in a fixed-rate mode, which permits random access to blocks)
and is lossy or lossless depending on the mode. They reduce memory and disk, but the blocks are decoded before arithmetic is applied.

\subsection{Index and value coding for sparse formats}

Sparse matrix--vector multiplication on modern hardware is memory-bound, so reducing the number of bytes moved can speed it up even if
the number of arithmetic operations is unchanged~\cite{williams2009optimization}. This motivated compressing the index arrays of CSR:
CSR-DU~\cite{kourtis2008optimizing} replaces column indices by small \emph{deltas} with variable-length encodings, CSX~\cite{kourtis2011csx}
detects substructures (horizontal, vertical, diagonal runs and blocks), and \citet{willcock2006accelerating}
proposed delta-coded variants of CSR. These formats decode a few indices inside the multiplication loop. The time per product is still proportional to $\nnz(A)$ (and
may get higher due to decoding), and the gain comes from bandwidth.
More recently, \citet{schatzle2026fast} compress CSR with an entropy coder (a variant of asymmetric numeral systems, dtANS) designed to be decoded in parallel on a GPU \emph{inside} the multiplication; the result is up to $11.77\times$ smaller than the smallest cuSPARSE format and, on matrices with at least $2^{25}$ nonzeros, usually faster, with a best speedup of $3.48\times$. Decoding cost is hidden by the memory-bound nature of the product, but the work is still proportional to $\nnz(A)$.

The same idea underlies the compression of integer sequences: universal and Golomb-like codes~\cite{elias1975universal,golomb1966run} (the WebGraph framework also
uses them for gaps~\cite{boldi2005codes}), Elias--Fano~\cite{elias1974efficient,vigna2013quasi,ottaviano2014partitioned}, and SIMD-friendly frame-of-reference codes with patching~\cite{zukowski2006super,lemire2015decoding}. Graph processing systems run algorithms
over gap-coded adjacency lists by decoding on the fly, as Ligra+~\cite{shun2015smaller} does. All of these decode before using the data, and they are
orthogonal to, and combinable with, the computation-friendly formats: for example, the residual graph $A'$ of \cref{sec:reference} or the sequence $C$ of \cref{sec:grammar} can be further encoded with these methods, at the price of a decoding step.

\subsection{Blocked and structured formats}

A different way to save indices is to store \emph{blocks} of nonzeros: in blocked CSR (BCSR) a single index describes a small dense block (of, say, $2\times2$ or $4\times4$ entries), which allows register
blocking and SIMD, at the price of storing explicit zeros to complete the blocks~\cite{pinar1999improving,im2004sparsity}; variable block structures~\cite{vuduc2005fast}, compressed sparse blocks
(CSB), which makes products with $A$ and $A\T$ equally efficient~\cite{buluc2009parallel}, and doubly compressed formats for hypersparse matrices
(DCSC)~\cite{buluc2008representation} refine this idea. GPU formats such as ELLPACK, diagonal (DIA), and hybrids (HYB)~\cite{bell2009implementing} pad or regularize rows to match the hardware.
They are lossless but their ``compression'' is a layout choice rather than a reduction of the information: the work is proportional to the number of stored entries, including padding.
The $k^2$-tree formats of \cref{sec:quadtree} are in some sense a more principled version of blocking, since blocks are discovered hierarchically.

\subsection{Lightweight schemes from column stores, and CLA}

Column-oriented database systems use lightweight compression (run-length, dictionary, bit-packing) and, whenever possible, operate directly on the compressed
columns~\cite{abadi2006integrating}. The matrix analogue is Compressed Linear Algebra (CLA)~\cite{elgohary2018compressed,elgohary2019compressed}, which groups
correlated columns and uses offset lists, run-length encoding and dictionary coding per group to support matrix--vector products on the compressed form,
with a cost related to the number of distinct values and runs, and its successors such as AWARE~\cite{baunsgaard2023aware}, which chooses formats according to the workload. \citet{donenfeld2022unified} propose a
compiler that generates code for sparse computations over lossless-compressed formats. These systems straddle our border: simple schemes such as run-length encoding are computation-friendly,
whereas more complex ones need partial decoding; they are the main baseline for \cref{sec:grammar}.

\subsection{Lossy compression for machine learning}

In deep learning the matrices are weights, and it is common to \emph{change} them: pruning produces sparse matrices~\cite{hoefler2021sparsity}, quantization reduces the precision of the values~\cite{gholami2021survey}, low-rank
factorizations replace a matrix by the product of two thin ones, and the combination of pruning, quantization and Huffman coding of Deep Compression~\cite{han2016deep} achieves large reductions of size, with entropy coding followed by decoding at inference time~\cite{marino2023efficient}. Lossy techniques
may damage accuracy in a data- and model-dependent way, but they are orthogonal to lossless ones: as observed in~\cite{ferragina2022improving}, one can
apply lossy compression first and lossless compression later, which is the setting in which the formats of the main body are most attractive, since they remain exact.

\subsection{Sparse tensor frameworks and GraphBLAS}

Frameworks for sparse algebra provide formats and kernels over semirings, without compression beyond sparsity: GraphBLAS~\cite{kepner2016mathematical,davis2019algorithm,kepner2011graph} expresses graph
algorithms as sparse matrix operations over semirings, and tensor compilers~\cite{kjolstad2017tensor,chou2018format} generate kernels for combinations of per-dimension formats (dense, compressed, singleton, \dots), including
compressed sparse fibers~\cite{smith2015tensor}. Their algebraic model is the natural target for integrating the computation-friendly formats of this survey (\cref{sec:open}), since they already abstract the format of an operand from the kernel that uses it.
Matrix multiplication in these frameworks follows classical sparse algorithms~\cite{gustavson1978two,gilbert1992sparse} that are the uncompressed counterparts of the baseline of \cref{sec:algebra}.

\end{document}